\newtheorem{definition}{Definition}
\newtheorem{theorem}{Theorem}
\newtheorem{lemma}{Lemma}
\newtheorem{proposition}{Proposition}
\newcommand{\MA}{{the partition scheme}}
\newcommand{\bea}{\begin{align}}
\newcommand{\eea}{\end{align}}
\renewcommand{\Pr}{\mathbb P}
\newcommand{\cA}{\mathcal{A}}
\newcommand{\beq}{\begin{eqnarray}}
\newcommand{\eeq}{\end{eqnarray}}
\newcommand{\beqn}{\begin{equation}}
\newcommand{\eeqn}{\end{equation}}
\newcommand{\Rp}{\mathbb{R}_+}
\newcommand{\lf}{\left}
\newcommand{\rf}{\right}
\newcommand{\Sg}{\Sigma}
\newcommand{\bdg}{\mathbf{d_G}}
\newcommand{\bone}{\mathbf{1}}
\newcommand{\mc}{\mathcal}
\newcommand{\mb}{\mathbf}
\renewcommand{\hat}{\widehat}
\newcommand{\cM}{\mc{M}}
\newcommand{\cW}{\mc{W}}
\newcommand{\cB}{\mc{B}}
\newcommand{\cG}{\mc{G}}
\newcommand{\cH}{\mc{H}}
\newcommand{\bB}{\mathbf{B}}
\newcommand{\bx}{\mb{x}}
\newcommand{\bX}{\mb{X}}
\newcommand{\bQ}{\mathbf{Q}}
\newcommand{\E}{\mathbb{E}}
\title{Partition-Merge: Distributed Inference and Modularity Optimization}
\author{\name Vincent Blondel \email vincent.blondel@uclouvain.be \\
       \addr Universite Catholique de Louvain, Belgium
       \AND
       Kyomin Jung \email kjung@snu.ac.kr \\
       \addr Seoul National University, South Korea
       \AND
       Pushmeet Kohli \email pkohli@microsoft.com \\
       \addr Microsoft Research, Cambridge, UK
       \AND
       \name Devavrat Shah \email devavrat@mit.edu \\
       \addr MIT, Cambridge, USA}
\begin{document}

\maketitle

\begin{abstract}
This paper presents a novel meta algorithm, Partition-Merge (PM), which takes existing centralized algorithms for graph computation and makes them distributed and faster. In a nutshell, PM divides the graph into small subgraphs using our novel randomized partitioning scheme, runs the centralized algorithm on each partition separately, and then {\em stitches} the resulting solutions to produce a global solution. We demonstrate the efficiency of the PM algorithm on two popular problems: computation of Maximum A Posteriori (MAP) assignment in an arbitrary pairwise Markov Random Field (MRF), and modularity optimization for  community detection. We show that the resulting distributed algorithms for these problems essentially run in time linear in the number of nodes in the graph, and perform as well -- or even better -- than the original centralized algorithm as long as the graph has geometric
structures\footnote{Roughly speaking, a graph has geometric structures, or polynomial growth property, when the number of nodes within
distance $r$ of any given node grows no faster than  a polynomial function of $r$.}. More precisely, if the centralized algorithm is a $\mathcal{C}-$factor approximation with constant $\mathcal{C}\ge 1$, the resulting distributed algorithm is a $(\mathcal{C}+\delta)$-factor approximation for any small $\delta>0$; but if
the centralized algorithm is a non-constant (e.g. logarithmic) factor approximation, then the resulting distributed algorithm becomes a constant factor approximation. For general graphs, we  compute explicit bounds on the loss of  performance of the resulting distributed algorithm with respect to the centralized algorithm.
\end{abstract}

\begin{keywords}
 Graphical model, Approximate MAP, Modularity Optimization, Partition
\end{keywords}

\section{Introduction}
Graphical representation for data has become central to modern large-scale data processing applications. For many of these applications, large-scale data computation boils down to solving problems defined over massive graphs. While the theory of centralized algorithms for graph problems is getting reasonably well developed, their distributed (as well as parallel) counterparts are still poorly understood and remain very active areas of
current investigation. Moreover, the emerging cloud-computation architecture is making the need of distributed solutions even more pressing.

\medskip
\noindent{\bf Summary of results.} In this paper, we take an important step towards this challenge. Specifically, we present a meta algorithm, Partition-Merge (PM), that makes existing centralized (exact or approximate) algorithms for graph computation distributed and faster without loss of performance, and in some cases, even improving performance. The PM meta algorithm is based on our novel  partitioning of the graph
into small disjoint subgraphs.  In a nutshell, PM partitions the graph into small subgraphs, runs the centralized algorithm on each partition separately (which can be done in distributed or parallel manner); and finally {\em stitches} the resulting solutions to produce a global solution.
We apply the PM algorithm to two representative class of problems: the MAP computation in a pairwise MRF and modularity optimization
based graph clustering.

The paper establishes that for any graph that satisfies the polynomial growth property, the resulting distributed PM based implementation of the original centralized algorithm is a $(\mathcal{C} + \delta)$-approximation algorithm whenever the centralized algorithm is a $\mathcal{C}$-approximation algorithm for some constant $\mathcal{C} \geq 1$. In this expression, $\delta$ is a small number that depends on a tuneable parameter of the algorithm that affects the size of the induced subgraphs in the partition; the larger the subgraph size, the smaller the $\delta$. More generally, if the centralized algorithm is an $\alpha(n)$-approximation (with $\alpha(n) = o(n)$) for a graph of size $n$, the resulting distributed algorithm becomes a constant factor approximation for graphs with geometric structure! The computational complexity of the algorithm scales linearly in $n$. Thus, our meta algorithm can make centralized algorithms, faster, distributed and improve its performance. 

The algorithm applies to any graph structure, but strong guarantees on performance, as stated above, require geometric structure. However, it is indeed possible to explicitly evaluate the loss of performance induced by the distributed implementation compared to the centralized algorithm as stated in Section \ref{sec:gen}.

A cautionary remark is in order. Indeed, by no means, this algorithm means to answer all problems in distributed computation. Specifically, for dense graph, this algorithm is likely to exhibit poor performances and definitely such graph structure would require a very different approach. Our meta algorithm requires that the underlying graph problem is {\em decomposable} or {\em Markovian} in a sense. Not all problems have this structure and these problem therefore require different way to think about them.

%\medskip
\subsection*{Related Work and Our contributions} The results of this paper, on one hand, are related to a long list of works on designing distributed algorithms for decomposable problems. On the other hand, the applications of our method to MAP inference in pairwise MRFs and clustering relate our work to a large number of results in these two respective problems. We will only be able to discuss very closely related work here.

We start with the most closely related work on the use of graph partitioning  for distributed algorithm design. Such an approach is quite old; see, e.g., ~\cite{decomp1, decomp2} and ~\cite{decomp3} for a detailed account of the approach until 2000. More recently, such decompositions have found wide variety of applications including local-property testing \cite{decomp4}. All such decompositions are useful for {\em homogeneous} problems, e.g. for finding maximum-{\em size} matching or independent set rather than the {\em heterogenous} maximum-{\em weight} variants of it. To overcome this
limitation, a different (somewhat stronger) notion of decomposition was introduced by Jung and Shah \cite{nips08} for minor-excluded graphs that built upon \cite{decomp2}. All of these results effectively partition the graph into small subgraphs and then solve the problem inside each small subgraph  using exact (dynamic programming) algorithms. While this results in a $(1+\epsilon)$-approximation algorithm for any $\epsilon > 0$ with
computation scaling essentially linearly in the graph size ($n$), the computation constant depends super-exponentially in $1/\epsilon$. Therefore, even with $\epsilon = 0.1$, the algorithms become unmanageable in practice.

As the main contribution of this paper, we first propose a novel graph decomposition scheme for graphs with geometry or polynomial growth structure.
Then we establish that by utilizing this decomposition scheme along with {\em any} centralized algorithm (instead of dynamic programming) for solving the problem inside the partition leads to performance comparable (or better) to that of the centralized algorithm for graph with polynomial growth. Then the resulting distributed algorithm becomes very fast in practice, unlike the dynamic programming approach, if the centralized algorithm inside the partition runs fast.
Similar guarantees can be obtained for minor-excluded graphs as well using the scheme utilized in \cite{nips08}. As mentioned earlier, the result is established for both MAP in pair-wise MRF and modularity optimization based clustering.

\paragraph{MAP Inference.}
Computing the exact Maximum a Posteriori (MAP) solution in a general probabilistic model is an NP-hard problem. A number of algorithmic approaches have been developed to obtain approximate solutions for these problems. Most of these methods work by making `local updates' to the assignment of the variables. Starting from an initial solution, the algorithms search the space of all possible local changes that can be made to the current solution (also called move space), and choose the best amongst them.

One such algorithm (which has been rediscovered multiple times) is called Iterated Conditional Modes or ICM for short. Its local update involves selecting (randomly or deterministically) a variable of the problem. Keeping the values of all other variables fixed, the value of the selected variable is chosen which results in a solution with the maximum probability. This process is repeated by selecting other variables until the probability cannot be increased further. The local step of the algorithm can be seen as performing inference in the smallest decomposed subgraph possible.

Another family of methods are related to max-product belief propagation (cf. \cite{Pea88} and \cite{YFW00}). In recent years a sequence
of results suggest that there is an intimate relation between the max-product algorithm and a natural linear programming relaxation -- for example, see \cite{WJWb, BSS05, BSS08, Jebara07, SSW07}. Many of these methods can be seen as making local updates to partitions of the dual problem \cite{SontagJ09,TarlowBKK11}.

We also note that Swendsen-Wang algorithm (SW)\cite{SW87}, a local flipping algorithm, has a philosophy similar to ours in
that it repeats a process of randomly partitioning the graph, and computing an assignment. However, the graph partitioning of SW
is fundamentally different from ours and there is no known guarantee for the error bound of SW.

In summary, all the approaches thus far with provable guarantees for local update based algorithm are primarily for linear or more generally convex optimization setup.

\paragraph{Modularity Optimization for Clustering.}
The notion of modularity optimization was introduced by Newmann \cite{newman} to identify
the communities or clusters in a network structure. Since then, it has become quite popular
as a metric to find communities or clusters in variety of networked data cf. \cite{B1, B2}. The major
challenge has been design of approximation algorithm for modularity optimization (which is
computationally hard in general) that can operate in distributed manner and provide performance guarantees.
Such algorithms with provable performance guarantees are known only for few cases, notably
logarithmic approximation of \cite{mod-opt-approx} via a centralized solution.

Our contribution in the context of modularity optimization lies in showing that indeed it is a
{\em decomposable} problem and therefore admits an distributed and fast
approximation algorithm through our approach.

\paragraph{Organization.}
The rest of the paper is organized as follows. Section \ref{sec:two}
	describes problem statement and preliminaries. Section \ref{sec:three} describes our
	main algorithms, and Section \ref{sec:four} presents analyses of our algorithms.
	Section \ref{sec:five2} and Section \ref{sec:five2} provides the proofs of our main theorems, and Section \ref{sec:six} presents the conclusion.

\section{Setup}\label{sec:two}

\noindent{\bf Graphs.} Our interest is in processing networked data represented through an undirected
graph $G = (V, E)$ with $n = |V|$ vertices and $E$ being the edge set. Let $m = |E|$ be the number of edges.
Graphs can be classified structurally
in many different ways: trees, planar, minor-excluded, geometric, expanding, and so on. We shall establish
results for graphs with geometric structure or polynomial growth which we define next. A graph
$G=(V,E)$ induces a natural `graph metric' on vertices $V$, denoted by $\bdg : V \times V \to \Rp$ with $\bdg(i, j)$
given by the length of the shortest path between $i$ and $j$; defined as $\infty$ if there is no path between
them.
\vspace{0.1cm}
\begin{definition}[Graph with Polynomial Growth]\label{def:polynomially}
We say that a graph $G$ (or a collection of graphs) has polynomial growth of degree  (or growth rate)
$\rho$, if for any $i \in V$ and $r \in {\mathbb N}$,
$$ | \mathbf{B}_G(i, r) | \leq C\cdot r^\rho,$$
where $C > 0$ is a universal constant and
~$\mathbf{B}_G(i,r)=\{j \in V| \mathbf{d}_G(i,j)< r\}.$
\end{definition}
Note that interesting values of $C, \rho$ are integral between $\{0, 1,\dots, n\}$, and it is easy to verify in $O(mn)$ time. Therefore we will assume knowledge of $C, \rho$ for algorithm design.  A large class of graph model naturally fall into the
graphs with polynomial growth. To begin with, the
standard $d$-dimensional regular grid graphs have
polynomial growth rate $d$. More generally, in recent years in the
context of computational geometry and metric embedding,
the graphs with finite doubling dimensions have become
popular object of study \cite{GKL031}. It can be checked
that a graph with doubling dimension $\rho$ is also
a graph with polynomial growth rate $\rho$. Finally,
the popular geometric graph model where nodes are
placed arbitrarily in some Euclidean space with some minimum distance separation, and two nodes have
an edge between them if they are within certain
finite distance, has finite polynomial growth rate \cite{GJSS091}.

\medskip

\noindent{\bf Pair-wise graphical model and MAP.} For a pair-wise Markov Random Filed (MRF) model defined on a graph $G = (V, E)$, each vertex
$i \in V$ is associated with a random variable $X_i$ which we
shall assume to be taking value from a finite alphabet $\Sigma$; the edge $(i, j) \in E$
represents a form of `dependence' between $X_i$ and $X_j$. More precisely, the joint distribution
is given by
\begin{align}
\mathbb P\big(\bX = \bx\big) & \propto \prod_{i \in V} \phi_i(x_i) \cdot \prod_{(i, j)\in
E} \psi_{ij}(x_i,x_j) ~~\label{markovicity}
\end{align}
where $\phi_i : \Sigma \to \Rp$ and $\psi_{ij}: \Sigma^2 \to \Rp$
are called node and edge potential functions\footnote{For simplicity of the analysis we assume strict positivity of $\phi_i$'s and
$\psi_{ij}$'s.}. The question of interest is to find the maximum a
posteriori (MAP) assignment $\bx^*\in \Sigma^n$, i.e.
$$ \bx^* \in \arg\max_{\bx \in \Sigma^n} \Pr[\bX = \bx].$$
Equivalently, from the optimization point of view, we wish to
find an optimal assignment of the problem
%\vspace{0.1cm}
$$ {\sf maximize} \quad \cH(\bx) \quad {\sf over} \quad \bx \in \Sigma^n, \quad \text{where}$$
$$\cH(\bx) = \sum_{i\in V} \ln \phi_i(x_i) + \sum_{(i, j)\in E} \ln \psi_{ij}(x_i, x_j).$$
For completeness and simplicity of exposition, we assume that
the function $\cH$ is finite valued over $\Sigma^n$. However, results
of this paper extend for hard constrained problems such
as the {\em hardcore} or {\em independent set} model.  We call
an algorithm $\alpha$ approximation for $\alpha \geq 1$ if it always produces assignment
$\widehat{\bx}$ such that
$$\frac{1}{\alpha} \cH(\bx^*) \leq \cH(\widehat{\bx}) \leq \cH(\bx^*).$$

\medskip

\noindent{\bf Social data and clustering/community.} Alternatively, in a social setting, vertices of graph $G$
can represents individuals and edges represent some form of interaction between them. For example,
consider a cultural show organized by students at a university with various acts. Let there be $n$ students
in total who have participated in one or more acts. Place an edge between two students if they participated
in at least one act together. Then the resulting graph represents interaction between students in terms of
acting together.

Based on this  observed network, the goal is to identify the set of all acts performed and
its `core' participants. The true answer, which classifies each student/node into the acts in which
s/he performed would lead to partitions of nodes in which a node may belong to multiple partitions.
Our interest is in identifying disjoint partitions which would, in this example, roughly mean
identification of `core' members of acts.

In general, to select a disjoint partition of $V$ given $G$, it is not clear what
is the appropriate criteria. Newman \cite{newman} proposed the notion of
modularity as a criteria. The intuition behind it is that a cluster or community
should be as distinct as possible from being `random'. Modularity of a partition of nodes is defined as the fraction of the edges
that fall within the  disjoint partitions minus the expected such fraction if edges were distributed at random with the same node degree sequences.
Formally, the modularity
of a subset $S \subset V$ is defined as
\begin{align}\label{eq:modularity}
M(S) & = \sum_{i, j \in S} \Big(A_{ij} - \frac{d_i d_j}{2m} \Big),
\end{align}
where $A_{ij} = 1$ iff $(i,j) \in E$ and $0$ otherwise, $d_i = |\{ k \in V: (i,k) \in E\}|$
is the degree of node $i \in V$, and $m = |E|$ represents the total number of edges in
$G$. More generally, the modularity of a partition of $V$,
$V = S_1 \cup \dots \cup S_{\ell}$ for some $1\leq \ell \leq n$ with
$S_{i} \cap S_j = \emptyset$ for $i \neq j$, is given by %defined as
\begin{align}\label{eq:modularity-1}
\cM(S_1,\dots, S_\ell) & = \frac{1}{2m} \Big(\sum_{i=1}^\ell M(S_i) \Big).
\end{align}
The modularity optimization approach \cite{newman} proposes to identify the community
structure as the disjoint partitions of $V$ that maximizes the total modularity, defined
as per \eqref{eq:modularity-1}, among all possible disjoint partitions of $V$ with
ties broken arbitrarily. The resulting clustering of nodes is the desired
answer.

We shall think of clustering as assigning {\em colors} to nodes. Specifically, given a
coloring $\chi : V \to \{1,\dots, n\}$, two nodes $i$ and $j$ are part of the same cluster
(partition) iff $\chi(i) = \chi(j)$. With this notation,  any clustering of $V$ can be represented
by some such coloring $\chi$ and vice versa.  Therefore, modularity optimization is
equivalent to finding a coloring $\chi$ such that its modularity $\cM(\chi)$ is maximized, where
\[
\cM(\chi) ~= \frac{1}{2m} \sum_{i, j \in V} \bone_{\{\chi(i) = \chi(j)\}} \Big(A_{ij} - \frac{d_i d_j}{2m}\Big).
\]
Here $\bone_{\{\cdot\}}$ is the indicator function with $\bone_{\{\mbox{\footnotesize true}\}} = 1$ and
$\bone_{\{\mbox{\footnotesize false}\}} = 0$. Let $\chi^*$ be a clustering that maximizes the modularity.
Then, as before, an algorithm will be said $\alpha$-approximate if it produces $\widehat{\chi}$
such that
\begin{align}
\frac{1}{\alpha} \cM(\chi^*) & \leq \cM(\widehat{\chi}) ~\leq \cM(\chi^*).
\end{align}

\section{Partition-Merge Algorithm}\label{sec:three}

We describe a parametric meta-algorithm for solving the MAP inference and modularity optimization.  The
meta-algorithm uses two parameters; a large constant $K \geq 1$ and a small real number $\varepsilon \in (0,1)$ to produce a
partition of $V = V_1\cup \dots \cup V_p$ so that each partition $V_j, 1\leq j\leq p$ is small. We will specify the values of $K$ and $\varepsilon$ in Section \ref{sec:four}. The
meta-algorithm uses an existing centralized algorithm to solve the original problem on each of
these partitioned sub-graphs $G_j = (V_j, E_j)$ independently where $E_j = (V_j \times V_j) \cap E$. The
result assignment leads to a candidate solution for the problem on entire graph. As we establish in Section \ref{sec:four},
this becomes a pretty good solution. Next, we describe the algorithm in detail.

\medskip
\noindent{\bf Step 1. Partition.}
We wish to create a partition of $V  = V_1\cup \dots \cup V_p$
for some $p$ with $V_i \cap V_j = \emptyset$ for $i \neq j$ so that the number of edges crossing
partitions are small. The algorithm for such partitioning is iterative. Initially, no node is part of
any partition. Order the $n$ nodes arbitrarily, say $i_1,\dots, i_n$. In iteration $k \leq n$,
choose node $i_k$ as the pivot. If $i_k$ belongs to $\cup_{\ell=1}^{k-1} V_\ell$, then
set $V_k = \emptyset$, and move to the next iteration if $k < n$ or else the algorithm concludes. If
$i_k \notin \cup_{\ell=1}^{k-1} V_\ell$, choose a radius $R_k \leq K$ at random with distribution
\begin{align}\label{eq:choiceofR}
\Pr\Big(R_k = \ell \Big) & = \begin{cases} \varepsilon (1-\varepsilon)^{\ell-1}  & \mbox{~for~} 1\leq \ell < K \\
                                                           (1-\varepsilon)^{K-1}, & \mbox{~for~} \ell = K.
\end{cases}
\end{align}
Let $V_k$ be set of all nodes in $V$ that are within distance $R_k$ of $i_k$, but that are not part of
$V_1\cup\dots\cup V_{k-1}$. Since we execute this step only if $i_k \notin \cup_{\ell=1}^{k-1} V_\ell$
and $R_k \geq 1$, $V_k$ will be non-empty. At the end of the $n$ iterations, we have a
partition of $V$ with at most $n$ non-empty partitions. Let the non-empty partitions of $V$
be denoted as $V = V_1\cup \dots \cup V_p$ for some $p \leq n$. A caricature of an iteration
is described in Figure 1.

%\begin{figure*}[t]
%\begin{center}
%\includegraphics[width=9cm]{graph.eps}
%\end{center}
%\caption{A pictorial description of an iteration of the graph partitioning.}
%\label{fig:decomp} %\vspace{-.1in}
%\end{figure*}

\begin{figure}
\centering
%\numberwithin{figure}{section}
\label{fig:decom} \includegraphics[width = 0.4 \textwidth]{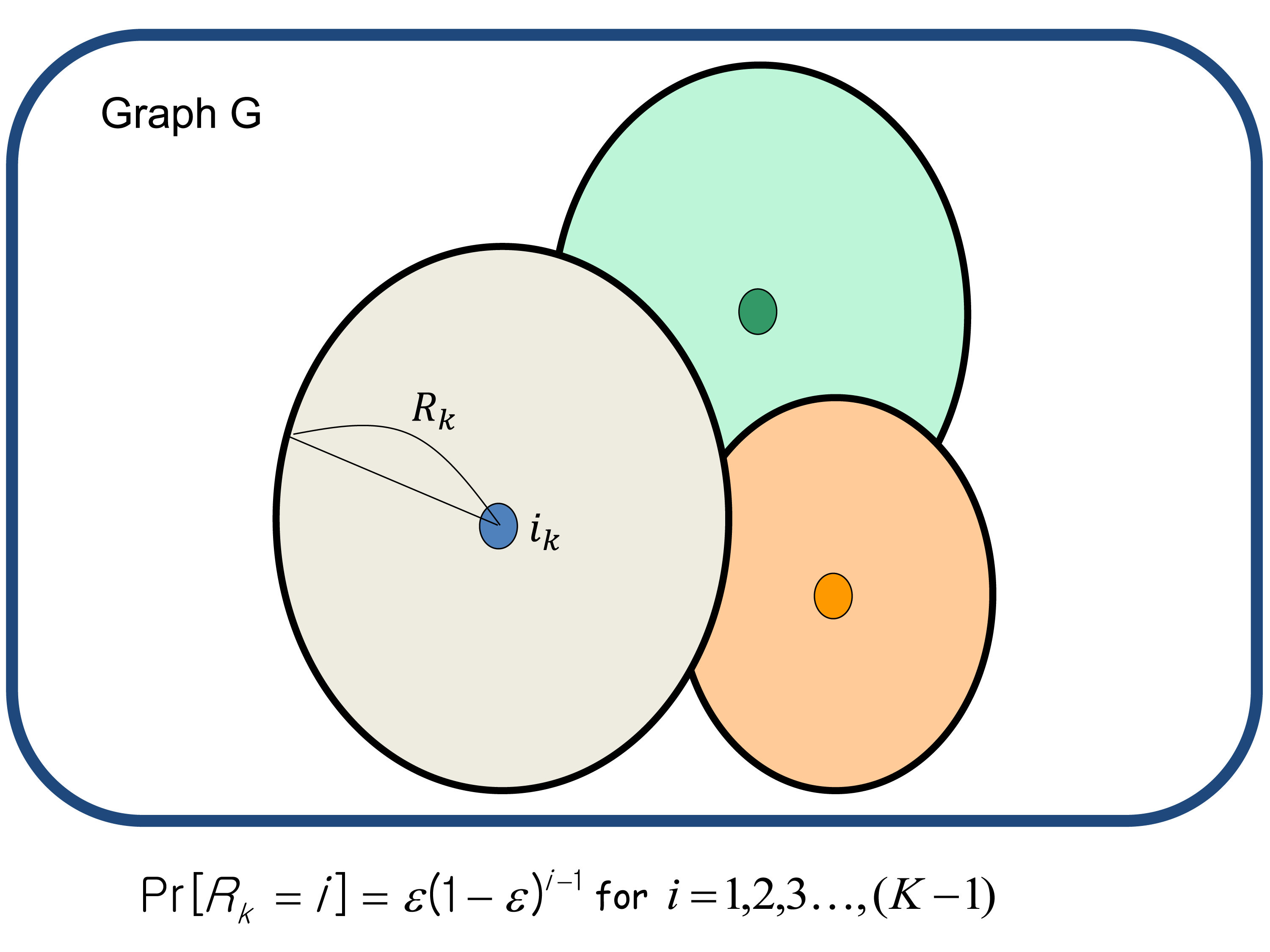}
\caption{A pictorial description of an iteration of the graph partitioning.}
\end{figure}

\medskip
\noindent{\bf Step 2. Merge (solving the problem).} Given the partition $V = V_1\cup \dots \cup V_p$,
consider the graphs $G_k = (V_k, E_k)$ with $E_k = (V_k \times V_k) \cap E$ for $1\leq k\leq p$. We shall
apply a centralized algorithm for each of these graph $G_1,\dots, G_k$ separately. Specifically,
let $\cA$ be an algorithm for MAP or for clustering: the algorithm may be exact (e.g. one solving problem
by exhaustive search over all possible options, or dynamic programming) or it may be an approximation
algorithm (e.g. $\alpha$-approximate for any graph). We apply $\cA$ for each subgraph
separately.
\begin{itemize}
\item[$\circ$] For MAP inference, this results in an assignment to all variables since in each partition each
node is assigned some value and collectively all nodes are covered by the partition. Declare thus
resulting global assignment, say $\widehat{\bx}$ as the solution for MAP.

\item[$\circ$] For modularity optimization, nodes in each partition $V_j$ are clustered. We declare the
union of all such clusters across partitions as the global clustering. Thus two nodes in different partitions
are always in different clusters; two nodes in the same partition are in different clusters if the centralized
algorithm applied to that partition clusters them differently.

\end{itemize}

\medskip
\noindent{\bf Computation cost.} The computation cost of the partitioning scheme scales linearly
in the number of edges in the graph. The computation cost of solving the problem in each of the components
$G_1,\dots, G_p$ depends on component sizes and on how the computation cost of algorithm $\cA$ scales
with the size. In particular, if the maximum degree of any node in $G$ is bounded, say by $d$,
then each partition has at most $d^K$ nodes. Then the overall cost is
$O(Q(d^K) n)$ where $Q(\ell)$ is the computation cost of $\cA$ for any graph with $\ell$ vertices.

\section{Main results}\label{sec:four}

\subsection{Graphs with polynomial growth}  We state sharp results for graphs with polynomial growth. We state results for
MAP inference and for modularity optimization under the same theorem statement to avoid repetition. The proofs, however, will
have some differences.

\begin{theorem}\label{thm:main-poly}
Let the graph $G = (V,E)$ have polynomial growth with degree $\rho \geq 1$ and constant $C \geq 1$. Then, for
a given $\delta \in (0,1)$, select parameters
\begin{align}
K = K(\rho, C, \delta) & = \frac{8\rho}{\varepsilon} \log\Big( \frac{8\rho}{\varepsilon}\Big) + \frac{4}{\varepsilon} \log C + \frac{4}{\varepsilon} \log \frac{1}{\varepsilon} + 2, \nonumber \\
\varepsilon = \varepsilon(\rho,C,\delta) & = \begin{cases} \frac{\delta}{2C2^\rho},  & \text{for~MAP} \\
              										\frac{\delta}{4(2C-1)}, & \text{for~modularity~optimization}.
										\end{cases}
\end{align}
Then, the following holds for the meta algorithm described in Section \ref{sec:three}.
\begin{itemize}
\item[(a)] If  $\cA$ solves the problem (MAP or modularity optimization) exactly, then the solution produced
by the algorithm $\widehat{\bx}$ and $\widehat{\chi}$ for MAP and modularity optimization respectively are such
that
\begin{align}\label{eq:1d}
(1-\delta) \cH(\bx^*) & \leq \E[\cH(\widehat{\bx})] \leq \cH(\bx^*) \nonumber \\
(1-\delta) \cM(\chi^*) & \leq \E[\cM(\widehat{\chi})] \leq \cM(\chi^*).
\end{align}

\item[(b)] If $\cA$ is $\alpha(n) \geq 1$ approximation algorithm for graphs with $n$ nodes, then
\begin{align}\label{eq:2d}
\left(\frac{1}{\alpha(\tilde{K})}-\delta\right) \cH(\bx^*) & \leq \E[\cH(\widehat{\bx})] \leq \cH(\bx^*), \nonumber \\
\frac{(1-\delta)}{\alpha(\tilde{K})}  \cM(\chi^*) & \leq \E[\cM(\widehat{\chi})] \leq \cM(\chi^*),
\end{align}
where $\tilde{K} = C K^\rho$.
\end{itemize}
\end{theorem}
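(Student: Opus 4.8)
The plan is to reduce both statements to a single quantity: the expected total contribution of the edges (or node pairs) that are \emph{cut}, i.e.\ whose endpoints land in different partitions. The starting observation is that both objectives decompose additively across partitions plus a cross-partition remainder. For MAP, writing $\cH_k$ for the restriction of $\cH$ to the subgraph $G_k$ (node potentials of $V_k$ together with edge potentials of $E_k$), we have $\cH(\bx) = \sum_{k=1}^p \cH_k(\bx|_{V_k}) + \sum_{(i,j)\,\text{cut}} \ln\psi_{ij}(x_i,x_j)$, since every node lies in exactly one partition and every edge is either internal to some $G_k$ or cut. Because $\widehat{\bx}$ is obtained by maximizing within each $G_k$ independently, in the exact case $\sum_k \cH_k(\widehat{\bx}|_{V_k}) \geq \sum_k \cH_k(\bx^*|_{V_k})$, so under the standard normalization making the potentials at least one (whence $\ln\phi_i,\ln\psi_{ij}\geq 0$) we get $\cH(\widehat{\bx}) \geq \cH(\bx^*) - \sum_{(i,j)\,\text{cut}} \ln\psi_{ij}(x_i^*,x_j^*)$. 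Taking expectations reduces part (a) for MAP to showing $\E\big[\sum_{(i,j)\,\text{cut}} \ln\psi_{ij}(x_i^*,x_j^*)\big] \leq \delta\,\cH(\bx^*)$, and since $\cH(\bx^*) \geq \sum_{(i,j)\in E}\ln\psi_{ij}(x_i^*,x_j^*)$ it suffices to bound $\Pr[(i,j)\,\text{cut}]\leq\delta$ uniformly over edges.

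The crux is therefore a partition lemma bounding the per-edge cut probability, and establishing it cleanly is the main obstacle. Fix an adjacent pair $(u,v)$, so $\bdg(u,v)=1$. An edge is cut precisely when the first pivot in the fixed processing order whose ball reaches $\{u,v\}$ reaches exactly one of the two endpoints. I would enumerate the pivots $w$ by $\min(\bdg(w,u),\bdg(w,v))$ and, for each candidate first reacher, bound the separation probability using the memorylessness of the truncated geometric law in \eqref{eq:choiceofR}: conditioned on $R_w$ being large enough to reach the nearer endpoint at distance $\ell<K$, the probability that $R_w$ stops before reaching the farther endpoint, one unit away, is exactly $\varepsilon$. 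Summing this $\varepsilon$-per-pivot cost over the candidate pivots, whose number in the relevant annulus is controlled by the polynomial growth bound $|\mathbf{B}_G(\cdot,r)|\leq C r^\rho$, yields a cut probability of order $C 2^\rho \varepsilon$, the factor $2^\rho$ reflecting the ball-size ratio between radius $\ell$ and radius $2\ell$ that enters the union bound; this is exactly why $\varepsilon$ is set to $\delta/(2C2^\rho)$ for MAP. The truncation at $\ell=K$ breaks memorylessness, so a separate boundary term must be controlled: the probability that the nearest pivot sits near distance $K$ is at most $(1-\varepsilon)^{K-1}$ times the polynomially many such pivots, and the stated choice $K=\frac{8\rho}{\varepsilon}\log\frac{8\rho}{\varepsilon}+\cdots$ is calibrated precisely so that $C K^\rho (1-\varepsilon)^{K-1}$ is negligible against the main term.

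For modularity the decomposition is analogous, but the normalization and the possibly-negative pair terms require care. Taking the reference clustering that agrees with $\chi^*$ within each partition and splits it across partitions, the algorithm does at least as well within every $G_k$ (exactly, or up to $\alpha(\tilde{K})$ in part (b)), and the loss equals $\frac{1}{2m}\sum_{(i,j)\,\text{cut},\,\chi^*(i)=\chi^*(j)}\big(A_{ij}-\tfrac{d_id_j}{2m}\big)$. Here I would discard the non-adjacent cut pairs, whose contribution $-d_id_j/(2m)$ is nonpositive and hence only helps the bound, and estimate the remaining loss by $\frac{1}{2m}\sum_{\text{adjacent, same cluster}}\Pr[(i,j)\,\text{cut}]$, again invoking the partition lemma for $\bdg(i,j)=1$; matching this against a lower bound on $\cM(\chi^*)$ accounts for the slightly different constant $\varepsilon=\delta/(4(2C-1))$.

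Finally, part (b) follows by replacing ``optimal within $G_k$'' with ``$\alpha(\tilde{K})$-approximate within $G_k$'': polynomial growth gives $|V_k|\leq C K^\rho=\tilde{K}$, so $\cA$ guarantees $\sum_k\cH_k(\widehat{\bx}|_{V_k})\geq \frac{1}{\alpha(\tilde{K})}\sum_k\cH_k(\bx^*|_{V_k})$, and combining with the same cut estimate together with $\alpha(\tilde{K})\geq 1$ yields $\big(\frac{1}{\alpha(\tilde{K})}-\delta\big)\cH(\bx^*)\leq\E[\cH(\widehat{\bx})]$, with the modularity bound obtained in the same fashion.
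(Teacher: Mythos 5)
Your overall architecture matches the paper's: decompose the objective into within-partition terms plus cut-edge terms, bound the per-edge cut probability, and convert the additive loss into a multiplicative one. Your MAP reduction is even slightly cleaner than the paper's (you compare the expected cut loss directly against $\sum_{(i,j)\in E}\ln\psi_{ij}(x_i^*,x_j^*)\leq \cH(\bx^*)$, where the paper detours through a Vizing's-theorem matching bound; both routes need the implicit normalization $\ln\phi_i,\ln\psi_{ij}\geq 0$). The genuine gap is in your proof of the crux, the partition lemma. You propose to ``sum the $\varepsilon$-per-pivot cost over the candidate pivots, whose number in the relevant annulus is controlled by the polynomial growth bound,'' arriving at a cut probability of order $C2^\rho\varepsilon$. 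Taken literally, the candidate first reachers are \emph{all} pivots within distance $K$ of the edge --- up to $CK^\rho$ of them --- so this union bound gives $\varepsilon\cdot CK^\rho$, which is vacuous since $K\gg 1/\varepsilon$. The $2^\rho$ you attribute to a ``ball-size ratio between radius $\ell$ and $2\ell$'' is not available: polynomial growth is only an \emph{upper} bound on ball sizes, so such ratios are uncontrolled (and FRT-style ratio arguments also require the pivots to be examined in uniformly random order, whereas here the order is arbitrary and fixed). The valid way to finish your first-reacher decomposition is to use that the events $\{\mbox{$w$ is the first active pivot whose ball reaches $\{u,v\}$}\}$ are \emph{disjoint} across $w$: conditioning on the history before $w$ acts, memorylessness gives $\Pr[\mbox{$w$ separates}]\leq \varepsilon\,\Pr[\mbox{$w$ reaches}]$ whenever $\bdg(w,e)<K$, so summing over $w$ costs $\varepsilon$ times probabilities totalling at most $1$, plus a boundary term $(1-\varepsilon)^{K-1}\abs{\bB(e,K)}\leq CK^\rho(1-\varepsilon)^{K-1}$ from pivots at distance exactly $K$ where truncation breaks memorylessness. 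Handling the conditioning (a pivot's activity depends on earlier radii) is exactly the delicate point; the paper does this bookkeeping by induction on the point set (its Proposition~1), and its Lemma~5 shows the stated $K$ forces the boundary term below $\varepsilon$, yielding $\Pr[e\in\cB]\leq 2\varepsilon$ --- note: no $\rho$-dependence in the cut probability at all. In the paper the factor $C2^\rho$ lives elsewhere: it is the max-degree bound $d^*+1\leq C2^\rho$ entering the lower bound on $\cH(\bx^*)$, so your placement of it inside the cut probability is a misdiagnosis, even though your final numerical conclusion happens to be consistent.

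A second, smaller gap: for modularity you correctly reduce the loss to $\abs{\cB}/(2m)$, but you then ``match this against a lower bound on $\cM(\chi^*)$'' without supplying one. This is a real lemma, not a formality: the paper proves $\cM(\chi^*)\geq \frac{1}{2(2C-1)}\big(1-\frac{C^2}{2m}\big)$ by exhibiting a matching of size $m/(2C-1)$ (possible because degrees are bounded by the growth constant) and declaring each matched edge a two-node community, and this is precisely where the constant $\varepsilon=\delta/(4(2C-1))$ comes from. Without that lemma, the multiplicative bounds for modularity in both parts (a) and (b) do not follow from your additive estimate.
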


%
%
%\begin{theorem}[\bf Modularity optimization]\label{thm:main-mod}
%Let the graph $G = (V,E)$ have polynomial growth with degree $\rho \geq 1$ and constant $C \geq 1$. Then, for
%a given $\delta \in (0,1)$, select parameters
%\begin{align}
%\varepsilon = \varepsilon(\rho,C,\delta) & = \frac{\delta}{4(2C-1)} \qquad  K = K(\rho, C, \delta) = \frac{8\rho}{\varepsilon} \log\Big( \frac{8\rho}{\varepsilon}\Big) + \frac{4}{\varepsilon} \log C + \frac{4}{\varepsilon} \log \frac{1}{\varepsilon} + 2.
%\end{align}
%Then, the following holds about the algorithm described above.
%\begin{itemize}
%%
%\item[(a)] If the procedure $\cA$ used finds exact solution to modularity optimization (e.g. by brute-force), then
%the proposed algorithm produces $\widehat{\chi}$ so that
%\[
%(1-\delta) \cM(\chi^*) \leq \E[\cM(\widehat{\chi})] \leq \cM(\chi^*).
%\]
%
%\item[(b)] If $\cA$ is $\alpha(n) \geq 1$ approximation algorithm for any graph with $n$ nodes, then
%\[
%\frac{(1-\delta)}{\alpha(\tilde{K})}  \cM(\chi^*) \leq \E[\cM(\widehat{\chi})] \leq \cM(\chi^*),
%\]
% where $\tilde{K} = C K^\rho$.
%
%%\item[(c)] The overall computation cost of the algorithm scales as $O(n Q(\rho, C, \delta))$
%%where $Q(\rho, C, \delta)$ depends on $\rho, C$ and $\delta$ and it is the bound on computation
%%cost of $\cA$ on a graph with at most $C K^\rho$ nodes.
%%
%\end{itemize}
%%
%\end{theorem}

\subsection{General graph}\label{sec:gen}

The theorem in the previous section was for graphs with polynomial growth. We now state results for general graph. Our result tells us how to evaluate the `error bound' on solutions produced by
the algorithm for any instantiation of randomness. The result is stated below for both MAP and modularity optimization. The
`error function' depends on the problem.

\begin{theorem}\label{thm:main-gen}
Given an arbitrary graph $G = (V, E)$ and our algorithm operating on it with parameters $K \geq 1$, $\varepsilon \in (0,1)$
using a known procedure $\cA$, the following holds:
\begin{itemize}
\item[(a)] If  $\cA$ solves the problem (MAP or modularity optimization) exactly, then the solution produced
by the algorithm $\widehat{\bx}$ and $\widehat{\chi}$ for MAP and modularity optimization respectively are such
that (with $\cB = E \backslash \cup_{k=1}^p E_k$),
\begin{align}\label{eq:1f}
\cH(\widehat{\bx})  & \geq~ \cH(\bx^*) -  \sum_{(i,j) \in \cB} \big({\psi_{ij}^U}-{\psi_{ij}^L}\big), \nonumber \\
\cM(\hat{\chi})  & \geq~ \cM(\chi^*) - \frac{|\cB|}{2m}.
\end{align}

\item[(b)] If $\cA$ is instead a $\alpha(n)$-approximation for graphs of size $n$, then
\begin{align}\label{eq:2f}
\cH(\widehat{\bx}) & \geq \frac{1}{\alpha({\tilde{K}})}\Big(\cH(\bx^*) - \sum_{(i,j) \in \cB} \big({\psi_{ij}^U}-{\psi_{ij}^L}\big)\Big) \nonumber \\
\cM(\tilde{\chi}) & \geq \frac{1}{\alpha({\tilde{K}})} \Big(\cM(\chi^*) - \frac{|\cB|}{2m }\Big),
\end{align}
where $\tilde{K}$ is the maximum number of nodes that are within $K$ hops of any single node in $V$.
\end{itemize}
In the expression above,
%where %$\cB$ is the (random) imaginary
$\psi_{ij}^U \triangleq\max_{\sigma, \sigma' \in \Sigma} \ln\psi_{ij}(\sigma,\sigma')$,
and $\psi_{ij}^L\triangleq\min_{\sigma, \sigma' \in \Sigma} \ln \psi_{ij}(\sigma,\sigma')$.
\end{theorem}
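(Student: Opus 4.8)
The plan is to prove everything \emph{instance-wise}. Since the bounds \eqref{eq:1f}--\eqref{eq:2f} involve the realized cut set $\cB = E\setminus\cup_{k=1}^p E_k$ and carry no expectation, I would fix an arbitrary outcome of the partition $V = V_1\cup\dots\cup V_p$ and argue deterministically; the randomness of Step~1 plays no role here. The engine of the whole argument is a single decomposition of the objective into a \emph{within-partition} part that $\cA$ actually operates on and a \emph{cut} part supported on $\cB$. For MAP, write $\cH(\bx) = \cH_{\mathrm{in}}(\bx) + \sum_{(i,j)\in\cB}\ln\psi_{ij}(x_i,x_j)$, where $\cH_{\mathrm{in}}(\bx) = \sum_{i\in V}\ln\phi_i(x_i) + \sum_{k}\sum_{(i,j)\in E_k}\ln\psi_{ij}(x_i,x_j)$ separates as $\sum_k \cH_k(\bx|_{V_k})$ over the disjoint subgraphs $G_k$. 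For modularity I would restrict attention to colorings $\chi$ that respect the partition (the only ones the merge step can emit); for these $\cM(\chi)=\sum_k\cM_k(\chi|_{V_k})$, where $\cM_k$ collects exactly the ordered pairs $i,j\in V_k$, using the \emph{global} degrees $d_i$ and global $m$.

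Part (a) then follows in two moves. Since $\cA$ is exact, $\widehat{\bx}$ maximizes each $\cH_k$, hence $\cH_{\mathrm{in}}(\widehat{\bx})\ge\cH_{\mathrm{in}}(\bx^*)$; combining this with $\ln\psi_{ij}(\widehat{x}_i,\widehat{x}_j)\ge\psi_{ij}^L$ and $\ln\psi_{ij}(x^*_i,x^*_j)\le\psi_{ij}^U$ on the cut edges yields $\cH(\widehat{\bx})\ge\cH(\bx^*)-\sum_{(i,j)\in\cB}(\psi_{ij}^U-\psi_{ij}^L)$. For modularity I would build an explicit feasible competitor $\chi_0$ by \emph{splitting} the optimal clusters of $\chi^*$ along partition boundaries: two nodes keep the same color under $\chi_0$ iff they shared a $\chi^*$-color and lie in the same $V_k$. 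Then $\cM(\widehat{\chi})\ge\cM(\chi_0)$, and $\cM(\chi^*)-\cM(\chi_0)$ equals $\tfrac1{2m}$ times the sum of $A_{ij}-d_id_j/2m$ over same-$\chi^*$-color pairs straddling two partitions. Dropping the nonpositive $-d_id_j/2m$ terms bounds this by the count of cross-partition adjacent pairs, i.e. by $|\cB|/2m$ once the ordered-versus-unordered bookkeeping of the pair sum against the edge count $|\cB|$ is settled.

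For part (b) I would replace ``maximizes'' by the $\alpha$-guarantee applied per subgraph. Every $V_k$ sits inside a ball of radius at most $K$ about its pivot, so $|V_k|\le\tilde{K}$; assuming $\alpha$ is nondecreasing, $\cH_k(\widehat{\bx}|_{V_k})\ge \alpha(\tilde{K})^{-1}\cH_k^*$ and likewise $\cM_k(\widehat{\chi}|_{V_k})\ge\alpha(\tilde{K})^{-1}\max\cM_k$. Summing over $k$ gives $\cH_{\mathrm{in}}(\widehat{\bx})\ge\alpha(\tilde{K})^{-1}\cH_{\mathrm{in}}(\bx^*)$ and $\cM(\widehat{\chi})\ge\alpha(\tilde{K})^{-1}\sum_k\max\cM_k\ge\alpha(\tilde{K})^{-1}\cM(\chi_0)$; feeding in the cut bounds from part~(a) through the factor $\alpha(\tilde{K})^{-1}$ produces \eqref{eq:2f}.

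The steps I expect to be most delicate are the two accounting points hidden in the modularity case and the sign handling in the composition. First, $\cA$ run on $G_k$ most naturally optimizes the subgraph's \emph{intrinsic} modularity, built from local degrees and local edge count, whereas the analysis needs it to (approximately) maximize $\cM_k$ with the \emph{global} $d_i$ and $m$; reconciling these normalizations is the crux of making the per-partition guarantee usable. Second, composing the multiplicative ratio in part~(b) requires the per-partition optima $\cH_k^*$ and $\max\cM_k$ to be nonnegative, and pushing the additive cut correction inside the $\alpha(\tilde{K})^{-1}$ factor in the MAP bound uses $\psi_{ij}^L\ge 0$ on cut edges; both are consistent with the normalization under which the approximation ratio is defined, but they are exactly where the constants must be pinned down.
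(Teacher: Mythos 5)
Your proposal is correct and takes essentially the same route as the paper: your within-partition/cut decomposition of $\cH$ is exactly the paper's Lemma~\ref{lemma:diff2}, your split coloring $\chi_0$ is precisely the union of the restrictions $\chi^{*,k}$ used in Lemma~\ref{lem:two}, and your part~(b) argument via $|V_k|\le\tilde{K}$ matches the paper's. The two delicate points you flag are genuinely present but are equally glossed over in the paper's own proof (it silently assumes the per-partition algorithm optimizes the globally normalized restricted modularity, and its ordered-pair versus edge-count accounting in Lemma~\ref{lem:two} hides a factor of~2); moreover, your composition for MAP in part~(b) --- pushing the cut correction inside the $1/\alpha(\tilde{K})$ factor using $\psi_{ij}^L\ge 0$ and $\alpha\ge 1$ --- actually establishes the theorem's stated bound, whereas the paper's own chain only yields the weaker $\frac{1}{\alpha(\tilde{K})}\cH(\bx^*)-\sum_{(i,j)\in\cB}\big(\psi_{ij}^U-\psi_{ij}^L\big)$.
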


\subsection{Discussion of results}

Here we dissect implications of the above stated theorems. To start with, Theorem \ref{thm:main-poly}(a)
suggests that when graphs have polynomial growth, there exists a Randomized Polynomial
Time  Approximation Scheme (PTAS) for MAP computation and modularity optimization that has computation time scaling linearly with $n$.
%The result for MAP was derived in \cite{nips09}, but the result for modularity optimization is new.

The dependence on $\rho$ and  $\delta$ is rather stringent and therefore, even for moderately
small $\delta$, it  may not be possible to utilize existing computers to implement brute-force/dynamic programming
 procedure.  The Theorem \ref{thm:main-poly}(b) suggests that, if instead of using exact procedure for
each partition, when an approximation algorithm is used, the resulting solution almost retains its
approximation guarantees: if $\alpha(n)$ is a constant, then the resulting approximation
guarantee is essentially the same constant; if $\alpha(n)$ increases with $n$ (e.g. $\log n$), then
the resulting algorithm provides a constant factor approximation ! In either case, even if the approximation
algorithm has superlinear computation time in the number of nodes (e.g. semi-definite programming),
then our algorithm provides a way to achieve similar performance but in linear time for polynomially growing
graphs.

The algorithm, for general graph, produces a solution for which we have approximation guarantees.
Specifically, the error scales with the fraction of edges across partitions
that are induced by our partitioning procedure. This error depends on parameters
$K, \varepsilon$  utilized by our partitioning procedure. For graph with polynomial growth,
we provide recommendations on what the values should be for these parameters. However,
for general graph we do not have precise recommendations. Instead, one may try various
values of $K \in \{1,\dots, n\}$ and $\varepsilon \in (0,1)$ and then choose the best
solution. Indeed, a reasonable way to implement such procedure would be to take values
of $K$ that are $2^k$ for $k \in \{0,\dots, \log n\}$ and $\varepsilon$  chosen
at regular interval with granularity that an implementor is comfortable with (the smaller the
granularity, the better).

\section{Proofs of Theorems \ref{thm:main-poly}, \ref{thm:main-gen}: MAP inference}
\label{sec:five1}

In this Section, we first prove Theorem \ref{thm:main-poly}, and Theorem \ref{thm:main-gen} for MAP inference.

\vspace{0.15cm}
\noindent{\bf Bound on $|E \backslash \cup_{k=1}^p E_k|$.} We first state the following Lemma which shows the essential property of ~\MA. Lemma \ref{lem:three} will be used in the proofs of Theorems \ref{thm:main-poly}, \ref{thm:main-gen} both for MAP and modularity optimization.  The proof of Lemma \ref{lem:three} is stated at the end of this Section.
%We recall the following result from their work.
\begin{lemma}\label{lem:three}
Given $G = (V, E)$ with polynomial growth of rate $\rho \geq 1$ and associated
constant $C \geq 1$, by choosing $K = K(\rho, C, \delta)$ and $\varepsilon =\varepsilon(\rho, C,\delta)=\frac{\delta}{4 (2C-1)}$,
the partition scheme satisfies that for any edge $e\in E$,
\begin{align}
\mathbb P(e \in \cB ) \le 2\varepsilon.
\end{align}
\end{lemma}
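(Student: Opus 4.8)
The plan is to bound the probability that a fixed edge $e=(i,j)$ becomes a cut edge (i.e., $e\in\cB$) by analyzing the randomized partitioning procedure from Step 1. An edge $e=(i,j)$ ends up in $\cB$ precisely when its two endpoints $i$ and $j$ land in different partitions $V_a,V_b$ with $a\neq b$. The key observation is that $i$ and $j$ are separated only if, for the pivot $i_k$ whose ball $V_k$ first captures one of them, the chosen radius $R_k$ is \emph{just barely} large enough to include one endpoint but not the other. Since $\bdg(i,j)=1$ for an edge, this is the event that the ball boundary falls exactly between $i$ and $j$.

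First I would set up the sequential exposure of the randomness: process the pivots $i_1,\dots,i_n$ in order, and for each unassigned pivot $i_k$ reveal its radius $R_k$. Condition on the first iteration $k^\star$ at which at least one of $i,j$ gets assigned to a partition. At that point the relevant pivot $i_{k^\star}=u$ is at some graph distances $d_i=\bdg(u,i)$ and $d_j=\bdg(u,j)$ from the two endpoints, with $|d_i-d_j|\le 1$ by the triangle inequality (since $\bdg(i,j)=1$). The edge is cut at this step only if $R_{k^\star}$ lands on the single separating value, e.g. $R_{k^\star}=\min(d_i,d_j)$ while the other endpoint needs strictly more. Using the geometric law~\eqref{eq:choiceofR}, the conditional probability that $R_{k^\star}$ equals a prescribed value $\ell$, given $R_{k^\star}\ge\ell$, is exactly $\varepsilon$ for $\ell<K$ (the memorylessness of the truncated geometric distribution). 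This is the heart of the argument: the geometric radius is designed so that, conditioned on reaching the separating distance, the probability of stopping right there is at most $\varepsilon$.

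The main obstacle, and where care is required, is handling the boundary case $\ell=K$ and the possibility that one endpoint is absorbed into an \emph{earlier} partition while the other survives. I would argue that the event $e\in\cB$ decomposes into the event that $e$ is cut by the pivot that first reaches it, and use a union-type bound accounting for the two symmetric roles of $i$ and $j$ as the closer endpoint; each contributes at most $\varepsilon$ by the memoryless property, giving the claimed $2\varepsilon$. The truncation at $K$ only helps: at $\ell=K$ the ball is taken with the remaining probability mass, and one must check that this does not create an additional separating opportunity beyond what the $\varepsilon$-per-step bound already covers. Notably, the polynomial-growth hypothesis and the specific value of $K=K(\rho,C,\delta)$ do \emph{not} enter this particular bound --- the clean $2\varepsilon$ estimate follows purely from the geometric radius distribution --- so I would flag that $K$ and the growth rate $\rho$ are needed only downstream (to ensure partitions are small enough for the centralized algorithm), not for Lemma~\ref{lem:three} itself.

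To organize the write-up I would (i) fix $e=(i,j)$ and define $k^\star$ as the first iteration assigning either endpoint; (ii) condition on the configuration at step $k^\star$ and invoke the memoryless/truncated-geometric identity $\PR{R_{k^\star}=\ell \mid R_{k^\star}\ge\ell}=\varepsilon$; (iii) bound the cut event by the two one-sided separating events and sum to get $2\varepsilon$. The routine calculation is the manipulation of the geometric tail probabilities, which I would state but not belabor.
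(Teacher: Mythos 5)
Your memorylessness observation is correct for radii strictly below the truncation point: if the first pivot whose ball reaches $e$ sits at distance $d<K$ from the edge, then conditioned on reaching it at all ($R\ge d$) the probability of stopping at the single separating value is exactly $\varepsilon$. But your treatment of the truncation is wrong, and this is precisely where the proof breaks. At $\ell=K$ the truncated geometric law gives $\Pr[R=K\mid R\ge K]=1$, not $\varepsilon$: a pivot at distance exactly $K$ from the closer endpoint of $e$ that reaches the edge at all (which happens with probability $P_K=(1-\varepsilon)^{K-1}$) cuts it with conditional probability one whenever the other endpoint lies at distance $K+1$. Summed over all potential pivots at distance $K$ from $e$, this contributes up to $P_K\cdot|\bB(e,K)|$ to $\Pr[e\in\cB]$, and nothing in the geometric law alone makes this small. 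Indeed, the statement you sketch --- a $2\varepsilon$ bound valid for \emph{any} $K$, with no growth assumption --- is false: take a grid with $K=2$ and $\varepsilon$ tiny, so that $\Pr[R=2]=1-\varepsilon\approx 1$; then the first pivot within distance $2$ of $e$ very often sits at distance $2$ from one endpoint and $3$ from the other, and the cut probability is $\Theta(1)$, not $2\varepsilon$. So your claims that ``the truncation at $K$ only helps'' and that the polynomial-growth hypothesis and the value $K(\rho,C,\delta)$ ``do not enter this particular bound'' are both incorrect --- they are exactly what saves the lemma.

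The paper's proof makes this explicit: an induction over the metric space (Proposition \ref{claim:two}) yields $\Pr[e\in\cB]\le\varepsilon+P_K\,|\bB(e,K)|$, where the first term is your memoryless part and the second is the truncation contribution; then polynomial growth gives $|\bB(e,K)|\le CK^{\rho}$, and the separate computation of Lemma \ref{lemma:new} shows that the prescribed $K$ satisfies $C(1-\varepsilon)^{K-1}K^{\rho}\le\varepsilon$, whence the total is $2\varepsilon$. Note also that your budget allocation is off even in the non-truncated regime: whichever endpoint is closer, the separating radius is the single value $\min(\bdg(u,i),\bdg(u,j))$, so the memoryless argument yields one $\varepsilon$ in total, not ``$\varepsilon$ per symmetric role''; the second $\varepsilon$ in the lemma is reserved entirely for the truncation term. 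Your first-reaching-pivot decomposition could be repaired into a correct (and induction-free) proof --- bound the distance-$(<K)$ pivots by $\varepsilon$ in aggregate, union-bound the distance-$K$ pivots by $P_K|\bB(e,K)|$ --- but the finish necessarily invokes polynomial growth and the definition of $K$, which is the content you flagged as unnecessary.
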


\vspace{0.2cm}
\noindent{\bf Lower bound on $\cH(\bx^*)$.} Here we provide a lower bound on $\cH^*=\cH(\bx^*)$ that will be useful
to obtain multiplicative approximation property.
\begin{lemma}\label{lem:1one}
Let $\cH^* = \max_{\bx \in \Sigma^n} \cH(\bx)$ denote the maximum value of $\cH$ for a given pair-wise MRF on a graph $G$.
If $G$ has maximum vertex degree $d^*$, then
\beq
(d^*+1)\cH(\bx^*) \ge \sum_{(i,j) \in E} \lf({\psi_{ij}^U}-{\psi_{ij}^L} \rf).
\eeq
\end{lemma}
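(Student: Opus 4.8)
The plan is to produce, for each color class of a proper edge-coloring of $G$, a single assignment that simultaneously extracts the maximal edge potential $\psi_{ij}^U$ on every edge of that class, and then to average these assignments against the optimum $\bx^*$.

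First I would invoke Vizing's theorem: since $G$ has maximum degree $d^*$, its edge set admits a proper edge-coloring with at most $d^*+1$ colors, i.e.\ a partition $E = M_1 \cup \dots \cup M_{d^*+1}$ in which each color class $M_c$ is a matching (a set of vertex-disjoint edges). The point of Vizing here is precisely that it delivers the factor $d^*+1$ appearing in the claim.

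Next, for each matching $M_c$ I would build an assignment $\bx^{(c)} \in \Sigma^n$ as follows: for every edge $(i,j) \in M_c$ set $(x_i^{(c)}, x_j^{(c)})$ to a maximizer of $\psi_{ij}$ (well-defined and non-conflicting because the edges of $M_c$ are vertex-disjoint), and set every remaining vertex to a value maximizing its own node potential. By construction $\ln \psi_{ij}(x_i^{(c)}, x_j^{(c)}) = \psi_{ij}^U$ for $(i,j)\in M_c$, while for every other edge $\ln\psi_{ij}(x_i^{(c)},x_j^{(c)}) \ge \psi_{ij}^L$ by the very definition of $\psi_{ij}^L$. Since $\bx^*$ is optimal, $\cH(\bx^*) \ge \cH(\bx^{(c)})$ for each $c$, and summing over the $d^*+1$ color classes gives
$$(d^*+1)\,\cH(\bx^*) \;\ge\; \sum_{c=1}^{d^*+1} \cH(\bx^{(c)}).$$

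Finally I would lower-bound the right-hand side edge-by-edge. Each edge lies in exactly one color class, so it contributes $\psi_{ij}^U$ once and at least $\psi_{ij}^L$ in each of the remaining $d^*$ classes; combined with the node-potential terms this yields
$$\sum_{c=1}^{d^*+1}\cH(\bx^{(c)}) \;\ge\; \sum_{(i,j)\in E}\big(\psi_{ij}^U + d^* \psi_{ij}^L\big) \;\ge\; \sum_{(i,j)\in E}\big(\psi_{ij}^U - \psi_{ij}^L\big),$$
where the last step uses nonnegativity of the log-potentials (the standing normalization under which $\cH \ge 0$, which is exactly what makes the multiplicative approximation guarantee meaningful) to discard the node contributions and to replace each $d^* \psi^L_{ij}$ by $-\psi^L_{ij}$. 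The main obstacle is really only the bookkeeping around the node potentials and the non-matching edges: one must check that fixing the endpoints of a matching never drives an unfavorable edge below $\psi^L_{ij}$ and that the node terms can be safely dropped. Both are immediate — the former from the definition of $\psi^L_{ij}$ and the latter from nonnegativity of the log-potentials — so the remainder is a direct averaging over the Vizing color classes.
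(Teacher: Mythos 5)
Your proof is correct and takes essentially the same route as the paper's: Vizing's theorem to split $E$ into $d^*+1$ matchings, assignments that extract $\psi_{ij}^U$ on every edge of a matching simultaneously, and nonnegativity of the log-potentials (the paper likewise invokes $\psi_{ij}^L \ge 0$) to discard the remaining terms. The only difference is cosmetic: the paper applies the pigeonhole principle to select the single heaviest color class and compares $\bx^*$ against that one assignment, whereas you sum the comparison over all $d^*+1$ classes --- the same counting argument arranged dually.
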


\begin{proof}
Assign weight $w_{ij} = \psi_{ij}^U$ to an edge $(i,j) \in E$.
Since graph $G$ has maximum vertex degree $d^*$, by Vizing's theorem there
exists an edge-coloring of the graph using at most $d^*+1$ colors.
Edges with the same color form a matching of the $G$. A standard
application of Pigeon-hole's principle implies that there is a color
with weight at least $\frac{1}{d^*+1} (\sum_{(i,j) \in E} w_{ij})$.
Let $M \subset E$ denote these set of edges. Then
$$ \sum_{(i,j) \in M} \psi_{ij}^U \geq \frac{1}{d^*+1}\lf(\sum_{(i,j) \in E} \psi_{ij}^U\rf).$$
Now, consider an assignment $\bx^M$ as follows: for each $(i,j) \in M$
set $(x_i^M, x_j^M) = \arg\max_{(x,x') \in \Sg^2} \psi_{ij}(x,x')$; for
remaining $i \in V$, set $x_i^M$ to some value in $\Sigma$ arbitrarily.
Note that for above assignment to be possible, we have used
matching property of $M$. Therefore, we have
\begin{eqnarray}
\cH(\bx^M) & = & \sum_{i \in V} \phi_i(x_i^M) + \sum_{(i,j) \in E} \psi_{ij}(x_i^M, x_j^M) \nonumber \\
& = & \sum_{i \in V} \phi_i(x_i^M) + \sum_{(i,j) \in E \backslash M} \psi_{ij}(x_i^M, x_j^M) + \sum_{(i,j) \in M} \psi_{ij}(x_i^M, x_j^M) \nonumber \\
& \stackrel{(a)}{\geq} & \sum_{(i,j) \in M} \psi_{ij}(x_i^M, x_j^M) \nonumber \\
& = & \sum_{(i,j) \in M} \psi_{ij}^U \nonumber \\
& \geq & \frac{1}{d^*+1} \lf[\sum_{(i,j)\in E} \psi_{ij}^U\rf].
\end{eqnarray}
Here (a) follows because $\psi_{ij}, \phi_i$ are non-negative valued
functions.  Since $\cH(\bx^*) \geq \cH(\bx^M)$ and $\psi_{ij}^L \geq
0$ for all $(i,j) \in E$, we prove Lemma \ref{lem:1one}.
\end{proof}

\medskip
\noindent{\bf Decomposition of $\cH^*$.} Here we show that by maximizing $\cH(\cdot)$
on a partition of $V$ separately and combining the assignments, the resulting $\widehat{\bx}$ has $\cH(\cdot)$ value as good as that
of MAP with penalty in terms of the edges across partitions.

\vspace{0.1in}

\begin{lemma}\label{lemma:diff2}
For a given MRF defined on $G$,
the algorithm \MA~produces output $\widehat{\bx}$ such that
$$ \cH(\widehat{\bx}) \ge \cH(\bx^*) - \left(\sum_{(i,j) \in \cB} \lf({\psi_{ij}^U}-{\psi_{ij}^L}\rf)\right),$$
where $\cB = E \backslash \cup_{k=1}^K E_k$,
$\psi_{ij}^U \triangleq\max_{\sigma, \sigma' \in \Sigma} \ln\psi_{ij}(\sigma,\sigma')$,
and $\psi_{ij}^L\triangleq\min_{\sigma, \sigma' \in \Sigma} \ln\psi_{ij}(\sigma,\sigma')$.
\end{lemma}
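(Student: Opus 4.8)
The plan is to exploit the additive (Markovian) structure of $\cH$ together with the fact that, on each partition, the meta-algorithm returns an assignment that maximizes the restricted objective exactly. First I would write the edge set as the disjoint union $E = \left(\cup_{k=1}^p E_k\right) \cup \cB$, which holds by the definition of $\cB$, and correspondingly decompose
\[
\cH(\bx) = \underbrace{\sum_{i \in V} \ln\phi_i(x_i) + \sum_{k=1}^p \sum_{(i,j)\in E_k} \ln\psi_{ij}(x_i,x_j)}_{=:\,\cH_{\text{in}}(\bx)} + \sum_{(i,j)\in\cB} \ln\psi_{ij}(x_i,x_j).
\]
Because the partition $V = V_1 \cup \dots \cup V_p$ is disjoint and each $E_k$ is the induced edge set $(V_k\times V_k)\cap E$, the node potentials and the within-partition edge potentials split cleanly across the $p$ subproblems, so that $\cH_{\text{in}}(\bx) = \sum_{k=1}^p \cH_k(\bx|_{V_k})$, where $\cH_k$ denotes the objective of the MAP problem on $G_k = (V_k, E_k)$.

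Next I would invoke optimality on each partition. Since $\cA$ solves MAP exactly on each $G_k$, the output $\widehat{\bx}$ restricted to $V_k$ maximizes $\cH_k$; in particular $\cH_k(\widehat{\bx}|_{V_k}) \ge \cH_k(\bx^*|_{V_k})$ for every $k$, because $\bx^*|_{V_k}$ is merely one feasible assignment on $G_k$. Summing over $k$ yields the key inequality $\cH_{\text{in}}(\widehat{\bx}) \ge \cH_{\text{in}}(\bx^*)$.

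Finally I would account for the crossing edges. Subtracting the two decompositions gives
\[
\cH(\widehat{\bx}) - \cH(\bx^*) = \big[\cH_{\text{in}}(\widehat{\bx}) - \cH_{\text{in}}(\bx^*)\big] + \sum_{(i,j)\in\cB}\big[\ln\psi_{ij}(\widehat{x}_i,\widehat{x}_j) - \ln\psi_{ij}(x^*_i,x^*_j)\big].
\]
The bracketed first term is nonnegative by the previous step, and each summand over $\cB$ is bounded below by $\psi_{ij}^L - \psi_{ij}^U = -(\psi_{ij}^U - \psi_{ij}^L)$ directly from the definitions of $\psi_{ij}^U$ and $\psi_{ij}^L$. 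This produces $\cH(\widehat{\bx}) \ge \cH(\bx^*) - \sum_{(i,j)\in\cB}(\psi_{ij}^U - \psi_{ij}^L)$, as claimed. I do not anticipate a genuine obstacle; the one point requiring care is verifying that the decomposition $\cH_{\text{in}} = \sum_k \cH_k$ is exact, i.e. that no node potential or within-partition edge is double-counted or dropped, which follows because the $V_k$ partition $V$ and the $E_k$ are the corresponding induced edge sets. The crossing edges in $\cB$ are the sole source of slack, and bounding each by its range $\psi_{ij}^U - \psi_{ij}^L$ is exactly what yields the stated penalty.
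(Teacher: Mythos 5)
Your proof is correct and follows essentially the same route as the paper's: decompose $\cH$ into within-partition terms plus crossing-edge terms, use exact optimality of $\cA$ on each subgraph $G_k$ (with the optimal restriction $\bx^*|_{V_k}$ as a feasible competitor) to get $\sum_k \cH_k(\widehat{\bx}|_{V_k}) \ge \sum_k \cH_k(\bx^*|_{V_k})$, and bound each crossing-edge discrepancy by $-(\psi_{ij}^U - \psi_{ij}^L)$. The only cosmetic difference is that the paper indexes the sum over connected components of $(V, E\setminus\cB)$ rather than over the partition sets $V_k$ directly, which changes nothing in the argument; your version is, if anything, slightly cleaner in consistently writing $\ln\psi_{ij}$.
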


\begin{proof}
Let $\bx^*$ be a MAP assignment of the MRF $\bX$ defined on $G$. Given an assignment $\bx\in \Sigma^{|V|}$ defined on a graph $G=(V,E)$ and a subgraph $S=(W,E')$ of $G$, let an assignment $\bx'\in\Sigma^{|W|}$ be called {\em a restriction of $\bx$ to $S$} if
$\bx'(v)=\bx(v)$ for all $v\in W$.
Let $S_1,\ldots, S_K$ be the  connected components of $G'=(V,E-\cB)$, and let $\bx_k^*$ be the restriction of $\bx^*$ to the component $S_k$.
Let $\bX_k$ be the restriction of the MRF $\bX$ to $G_k=(S_k,E_k)$, where $E_k=\{(u,w)\in E | u,w\in S_k\}$.
% Then, since $\bx^*$ is a MAP assignment of $\bX$, $\bx_k^*$ is a MAP assignment of $\bX_k$.

For $\bx_k\in \Sigma^{|S_k|}$, define
$$\cH_k(\bx_k) = \sum_{i\in S_k} \phi_i(x_i) + \sum_{(i,j)\in E_k} \psi_{ij}(x_i,x_j).$$

Let $\widehat \bx$ be the output of ~\MA, and let $\widehat \bx_k$ be the restriction of $\widehat \bx$ to the component $S_k$.
Note that since $\widehat \bx_k$ is a MAP assignment of $\cH_k(\cdot)$ by the definition of our algorithm, for all $k=1,2,\ldots K$,

\beq \cH_k(\hat \bx_k) \ge \cH_k(\bx_k^*). \label{eq:ineq} \eeq

%$$\partial S_k=\{ u \in S_k| (u,w)\in \cB~ \mbox{for some} ~w\in V\}$$
%be the set of boundary vertices of $S_k$.
%Then the condition that all the vertices outside $S_k$ are fixed to an assignment is equivalent to changing the vertex potential functions on $\partial S_k$ due to the fixed end point edge potentials of $\cB_k$, and removing the edge potentials of $\cB_k$.

%Let $\bX_k$ be the MRF obtained from $\bX$ by restricting to $G_k$, then changing the boundary vertex potentials due to the fixed assignment outside $S_k$ at the iteration $T(S_k)$ of \MA.
%Let $\{\phi^k_i(\cdot)\}_{i\in S_k}$ be the vertex potential functions of $\bX_k$.
%For $i\in S_k$, let
%$$\phi_{i}^{k,D}=\max_{x_i\in \Sigma} \left|\phi_{i}(x_i)-\phi^k_{i}(x_i)\right|$$

%Then, from the potential change of $\phi^k_{i}$,  we have
%\beq\sum_i \phi_{i}^{D}\le \sum_{(u,w)\in \cB_j} (\psi_{ij}^U - \psi_{ij}^L).\label{eq:444}\eeq

Now, we have
\begin{eqnarray}
\cH(\hat \bx) - \cH(\bx^*)
&=& \sum_{k=1}^K \left[\cH_k (\widehat \bx_k) - \cH_k(\bx_k^*)\right] +\sum_{(i,j)\in \cB} \psi_{ij}(\widehat x_i,\widehat x_j) - \psi_{ij}(x^*_i, x^*_j)\nonumber\\
&\stackrel{(a)}\ge& \sum_{k=1}^K \left[\cH_k (\widehat \bx_k) - \cH_k(\bx_k^*)\right] - \sum_{(i,j)\in \cB} (\psi^U_{ij} - \psi^L_{ij})\nonumber\\
&\stackrel{(b)}\ge& - \sum_{(i,j)\in \cB} (\psi^U_{ij} - \psi^L_{ij})\label{eq:445}.
\end{eqnarray}
Here (a) follows from the definitions of $\psi^U_{ij}$ and $\psi^L_{ij}$, and (b) follows from (\ref{eq:ineq}).
This completes the proof of Lemma \ref{lemma:diff2}.
\end{proof}

\noindent{\bf Completing Proof of Theorem \ref{thm:main-poly}(a).}
Recall that the maximum vertex degree $d^*$ of $G$ is less than
$2^{\rho}C$ by the definition of polynomially growing graph. Remind our definition $\varepsilon=\frac{\delta}{2C 2^{\rho}}$ for MAP inference.
Now we have that
\begin{eqnarray}
\E[\cH(\widehat{\bx})] &\stackrel{(a)}\ge& \cH(\bx^*) - \E\left[\sum_{(i,j) \in \cB} \lf({\psi_{ij}^U}-{\psi_{ij}^L}\rf)\right] \\
&\stackrel{(b)}\ge& \cH(\bx^*) - 2\varepsilon \left(\sum_{(i,j) \in E} \lf({\psi_{ij}^U}-{\psi_{ij}^L}\rf)\right)\\
&\stackrel{(c)}\ge& \cH(\bx^*)\left( 1- 2\varepsilon (d^*+1)  \right)\\
&\stackrel{(d)}\ge& (1-\delta) \cH(\bx^*).
\end{eqnarray}
Here (a) follows from Lemma \ref{lemma:diff2}, (b) follows from Lemma \ref{lem:three}, (c) from Lemma \ref{lem:1one}, and (d) follows from the definition of $\varepsilon$ for MAP inference.
This completes the proof of Theorem \ref{thm:main-poly}(a) for MAP inference.

\medskip
\noindent{\bf Completing Proof of Theorem \ref{thm:main-poly}(b).} Suppose that we use an approximation procedure $\cA$
to produce an approximate MAP assignment $\widehat \bx_k$ on each partition $S_k$ in our algorithm. Let $\cA$ be such that the assignment produced
satisfies that $\cH_k(\widehat \bx_k)$ has value at least $1/\alpha(n)$ times the maximum $\cH_k(\cdot)$ value for any graph of size $n$. Now since
$\cA$ is applied to each partition separately, the approximation is within $\alpha(\tilde{K})$ where $\tilde{K} = C K^\rho$
is the bound on the number of nodes in each partition.
\begin{equation}
\cH(\widehat \bx_k) \geq \frac{1}{\alpha(\tilde{K})} \cM(\bx_k^*).\label{eq:approx}
\end{equation}

By the same proof of Lemma \ref{lemma:diff2} together with (\ref{eq:approx}), we have that

\begin{equation}
\E[\cH(\widehat{\bx})] \ge \frac{1}{\alpha(\tilde{K})}\cH(\bx^*) - \E\left[\sum_{(i,j) \in \cB} \lf({\psi_{ij}^U}-{\psi_{ij}^L}\rf)\right]. \label{eq:approx2}
\end{equation}
Hence we have that
\begin{eqnarray}
\E[\cH(\widehat{\bx})] &\ge& \frac{1}{\alpha(\tilde{K})}\cH(\bx^*) - \E\left[\sum_{(i,j) \in \cB} \lf({\psi_{ij}^U}-{\psi_{ij}^L}\rf)\right]\\
&\stackrel{(a)}\ge& \frac{1}{\alpha(\tilde{K})}\cH(\bx^*) - 2\varepsilon \left(\sum_{(i,j) \in E} \lf({\psi_{ij}^U}-{\psi_{ij}^L}\rf)\right)\\
&\stackrel{(b)}\ge& \cH(\bx^*)\left( \frac{1}{\alpha(\tilde{K})}- 2\varepsilon (d^*+1)  \right)\\
&\stackrel{(c)}\ge& \Big(\frac{1}{\alpha(\tilde{K})}-\delta\Big) \cH(\bx^*).
\end{eqnarray}

\noindent Here (a) follows from Lemma \ref{lem:three}, (b) follows from Lemma \ref{lem:1one}, and (c) from the definition of $\varepsilon$ for MAP inference.
This completes the proof of Theorem \ref{thm:main-poly}(b) for MAP inference.

\vspace{0.15cm}
\medskip
\noindent{\bf Completing Proof of Theorem \ref{thm:main-gen}.} The same arguments as in the proof Theorem \ref{thm:main-poly} together with Lemma \ref{lemma:diff2} completes the proof of Theorem \ref{thm:main-gen} for MAP inference.
%When $\cA$ produces exact solution to the
%modularity optimization for each partition, the resulting solution of our algorithm is $\hat{\chi}$.
%Therefore, the Lemma \ref{lem:two} immediately implies the desired claim. % of Theorem \ref{thm:main.1}(a).

%\medskip
%\noindent{\bf Completing Proof of Theorem \ref{thm:main-gen}(b).}

%Suppose we use an approximation procedure $\cA$
%to produce clustering on each partition in our algorithm. Let $\cA$ be such that the clustering produced
%has modularity at least $1/\alpha(n)$ times the optimal modularity for any graph of size $n$. Now since the
%$\cA$ is applied to each partition separately, the approximation is within $\alpha(\tilde{K})$ where $\tilde{K} = C K^\rho$
%is the bound on the number of nodes in each partition. Let $\tilde{\chi}^1,\dots,\tilde{\chi}^p$ be the clustering
%(coloring) produced by $\cA$ on graphs $G_1,\dots, G_p$. Then by property of $\cA$, we have
%\begin{align}
%\cM(\tilde{\chi}^k) & \geq \frac{1}{\alpha(\tilde{K})} \cM({\chi}^k).
%\end{align}
%Therefore, for the overall clustering $\tilde{\chi}$ obtained as union of $\tilde{\chi}^1,\dots,\tilde{\chi}^p$,
%\begin{align}
%\cM(\tilde{\chi}) & = \sum_{k=1}^p \cM(\tilde{\chi}^k)  %\nonumber \\
                         %&
%                         ~\geq \frac{1}{\alpha(\tilde{K})} \sum_{k=1}^p \cM({\chi}^k) %\nonumber \\
                         %&
%                         ~\geq \frac{1}{\alpha(\tilde{K})} \cM(\hat{\chi}).
%\end{align}
%From this and Theorem \ref{thm:main-gen}(a), the desired claim follows. %f Theorem \ref{thm:main.1}(b) follows immediately.

\vspace{0.25cm}
\noindent{\bf Proof of Lemma \ref{lem:three}.}\label{a:ld2}
Now we prove Lemma \ref{lem:three}. First, we consider property of \MA~applied to a generic metric space $\cG = (V, \bdg)$,
where $V$ is the set of points over which metric $\bdg$ is defined. We state the result below for any metric space (rather than
restricted to a graph) as it's necessary to carry out appropriate induction based proof. Note that the algorithm~\MA~can be
applied to any metric space (not just graph as well) as it only utilizes the property of metric in it's definition. The edge set
$E$ of metric space $\cG$ is precisely the set of all vertices that are within distance $1$ of each other.

\vspace{0.10in}
\begin{proposition}\label{claim:two}
Consider a metric space $\cG = (V, \bdg)$ defined over $n$ point set $V$, i.e. $|V|=n$.
Let $\cB = E \backslash \cup_{k=1}^p E_k$ be the boundary set of \MA~applied to $\cG$. Then, for any $e \in E$,
$$ \Pr[e \in \cB] \leq \varepsilon +  P_K \cdot | \bB(e, K)|, $$
where $\bB(e,K)=\bB_G(e,K)$ is the union of the two balls of radius $K$ in $\cG$ with respect to
the $\bdg$ centered around the two end vertices of $e$, and $P_K=(1-\varepsilon)^{K-1}$.
\end{proposition}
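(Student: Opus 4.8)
The plan is to fix an edge $e=(u,v)$ and track the \emph{single} pivot that is responsible for deciding whether $e$ is cut, exploiting the memorylessness of the truncated geometric law in \eqref{eq:choiceofR}. First I would note that the scheme is unchanged in distribution if we draw, ahead of time, an independent radius $R_w$ (with the law \eqref{eq:choiceofR}) for every $w\in V$, and then, processing vertices in the fixed order, let each still-uncovered pivot claim its pre-drawn ball. Write $d_w=\min(\bdg(w,u),\bdg(w,v))$ and $D_w=\max(\bdg(w,u),\bdg(w,v))$; since $e\in E$ we have $\bdg(u,v)\le 1$, so the triangle inequality gives $D_w\le d_w+1$. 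With integer graph distances this means that once a pivot's ball reaches the nearer endpoint it either stops one step short of the farther endpoint or engulfs it as well.

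Next I would isolate the pivot that settles $e$. Let $w^*$ be the first pivot, in processing order, that is active (uncovered at its turn) and whose ball captures at least one of $u,v$; such a $w^*$ always exists since every vertex is eventually claimed. At the instant $w^*$ acts, neither $u$ nor $v$ has been claimed, so $w^*$ reaches the near endpoint iff $R_{w^*}\ge d_{w^*}$ and reaches \emph{both} iff $R_{w^*}\ge D_{w^*}$. Hence $e\in\cB$ precisely when $w^*$ reaches the near endpoint but not the far one, i.e. $d_{w^*}\le R_{w^*}<D_{w^*}$; with $D_{w^*}\le d_{w^*}+1$ this forces $R_{w^*}=d_{w^*}$ and $D_{w^*}=d_{w^*}+1$. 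In particular only pivots with $d_w\le K$ can ever cut $e$.

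The decisive step is a conditioning that decouples the \emph{identity} of $w^*$ from its \emph{radius}. For each $w$ let $A_w$ be the event that $w$ is active and that no pivot preceding $w$ has claimed $u$ or $v$; this depends only on the radii of vertices processed before $w$, hence is independent of $R_w$. Then $\{w^*=w\}=A_w\cap\{R_w\ge d_w\}$, so $\PR{w^*=w}=\Pr(A_w)\,\Pr(R_w\ge d_w)$, and since some pivot always reaches $e$ we get the normalization $\sum_w \Pr(A_w)\Pr(R_w\ge d_w)=1$. I would then write $\Pr(e\in\cB)=\sum_w \Pr(A_w)\Pr(R_w=d_w)\,\bone_{\{D_w=d_w+1\}}$ and split by $d_w$. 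Using $\Pr(R_w\ge d_w)=(1-\varepsilon)^{d_w-1}$ and $\Pr(R_w=d_w)=\varepsilon(1-\varepsilon)^{d_w-1}$ for $1\le d_w<K$, the contribution of all such pivots is at most $\varepsilon\sum_w\Pr(A_w)\Pr(R_w\ge d_w)=\varepsilon$ (endpoints, with $d_w=0$, never cut). The remaining pivots sit at the truncation distance, where $\Pr(R_w=d_w)=(1-\varepsilon)^{K-1}=P_K$ carries no factor $\varepsilon$; bounding $\Pr(A_w)\le 1$ and counting these pivots by $\lvert\bB(e,K)\rvert$ gives an extra $P_K\lvert\bB(e,K)\rvert$. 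Adding the two bounds yields $\Pr(e\in\cB)\le\varepsilon+P_K\lvert\bB(e,K)\rvert$.

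The part I expect to require the most care is the bookkeeping around \emph{active} pivots: a vertex close to $e$ may be swallowed by an earlier ball and never get to act, so a naive union bound over candidate cutters is far too lossy. The clean resolution is exactly the event $A_w$, whose independence from $R_w$ lets the memoryless factor $\varepsilon$ be pulled out of the sum against the normalization $\sum_w\Pr(A_w)\Pr(R_w\ge d_w)=1$. A secondary subtlety is the boundary shell of pivots whose ball is truncated at radius $K$, for which the conditional stopping probability is $1$ rather than $\varepsilon$; this is precisely the source of the additive $P_K\lvert\bB(e,K)\rvert$ term, and pinning down that these boundary pivots all lie inside $\bB(e,K)$ requires fixing the radius convention (strict versus non-strict capture) consistently with the definition of $\bB_G$.
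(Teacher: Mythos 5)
Your proof is correct, but it takes a genuinely different route from the paper's. The paper proves Proposition \ref{claim:two} by induction on the number of points: it conditions on the position of the \emph{first} pivot relative to $e$ (distance $<K$, $=K$, or $>K$), uses memorylessness of the truncated geometric law inside the first case, and invokes the induction hypothesis on the residual metric space $(\cW_1,\bdg)$, relying on the fact that balls only lose points when vertices are removed. Your argument is direct and non-inductive: you pre-draw all radii (deferred decisions), identify the unique decisive pivot $w^*$ (the first active pivot whose ball reaches an endpoint of $e$), note that the event $A_w$ is measurable with respect to radii of earlier vertices and hence independent of $R_w$, and then convert the memoryless identity $\Pr(R_w=d_w)=\varepsilon\,\Pr(R_w\ge d_w)$ for $d_w<K$ into the additive $\varepsilon$ via the normalization $\sum_w\Pr(A_w)\Pr(R_w\ge d_w)=1$, with the truncation shell supplying the $P_K$ term. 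What your route buys is transparency and a slightly sharper bound: each term is traced to its probabilistic source, and you actually obtain $\Pr(e\in\cB)\le\varepsilon+P_K\,\lvert\{w:\,d_w=K,\ D_w=d_w+1\}\rvert$, which counts only the shell at distance exactly $K$ rather than the full union of balls; the paper's induction is more mechanical but extends painlessly to the point-removal dynamics. Regarding the caveat you raise at the end: the convention issue (whether points at distance exactly $K$ from $e$ are counted in $\bB(e,K)$) is not a defect specific to your argument --- the paper's own Case 2, which asserts that the residual ball has at most $\lvert\bB(e,K)\rvert-1$ points after a pivot at distance exactly $K$ is removed, implicitly requires the same non-strict reading of $\bB(e,K)$, in mild tension with the strict ball of Definition \ref{def:polynomially}; so on this point your proof and the paper's stand on exactly the same footing.
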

\vspace{0.10in}
\begin{proof}
The proof is by induction on the number of points $n$.   When $n=1$, the algorithm chooses only
point as $u_0$ in the initial iteration and hence no edge can be part
of the output set $\cB$. That is, for any edge, say $e$,
$$\Pr[e\in \cB] = 0 \leq \varepsilon + P_K |\bB(e,K)|.$$
Thus, we have verified the base case for induction ($n=1$).

As induction hypothesis, suppose that the Proposition \ref{claim:two} is
true for any graph with $n$ nodes with $n< N$ for some $N \geq
2$. As the induction step, we wish to establish Proposition \ref{claim:two} for any
$\cG=(V, \bdg)$ with $|V| = N$. For this, consider any $v \in V$. Now consider the last
iteration of the \MA~applied to $\cG$. The algorithm picks
$i_1 \in V$ uniformly at random in the first iteration. Given $e$,
depending on the choice of $i_1$ we consider three different cases
(or events). We will show that in these three cases,
$$ \Pr[e \in \cB] \leq \varepsilon +  P_K | \bB(e, K)|$$
holds.

\vspace{0.10in}

{\em Case 1.} Suppose $i_1$ is such that $\bdg(i_1, e) < K$, where the distance of a point and an edge of $\cG$
is defined as a minimum distance from the point to one of the two end-points of the edge. Call this event
$E_1$. Further, depending on choice of random number $R_1$, define the following events
$$ E_{11}= \{\bdg(i_1,e) < R_1\}, ~ E_{12}=\{\bdg(i_1,e) = R_1\}, ~\mbox{and}~ E_{13}=\{\bdg(i_1,e) > R_1\}.$$
By the definition of the partition scheme, when $E_{11}$ happens, $e$ can never be a part of $\cB$. When
$E_{12}$ happens, $e$ is
definitely a part of $\cB$. When $E_{13}$ happens, it is said to be
left as an element of the set $\cW_1$.  This new vertex set $\cW_1$ has
points less than $N$. The original metric $\bdg$ is still considered as the metric on
the points\footnote{Note the following subtle but crucial point. We are
not changing the metric $\bdg$ after we remove points from the original
set of points.} of $\cW_1$.  By its definition, \MA~ excluding the first iteration is the same as \MA~applied to $(\cW_1, \bdg)$. Therefore, we can invoke
induction hypothesis which implies that if event $E_{13}$ happens
then the probability of $v \in \cB$ is bounded above by $ \varepsilon +
P_K \cdot |\bB(e, K)|$, where $\bB(e,K)$ is the ball with respect to $(\cW_1, \bdg)$
which has no more than the number of points in the ball  $\bB(e,K)$
defined with respect to the original metric space $\cG$. Finally, let us relate the $\Pr[E_{11}|E_2]$ with
$\Pr[E_{12}|E_1]$. Suppose $\bdg(i_1, e) = \ell < K$. By the definition
of probability distribution of $\bQ$, we have \beq
 \Pr[E_{12}|E_1]   & = &  \varepsilon (1-\varepsilon)^{\ell-1},
 \eeq
 \beq
 \Pr[E_{11}|E_1]   & = &  (1-\varepsilon)^{K-1} + \sum_{j=\ell+1}^{K-1} \varepsilon (1-\varepsilon)^{j-1} \nonumber \\
  & = & (1-\varepsilon)^{\ell}.
  \eeq
That is,
 $$ \Pr[E_{12}|E_1] = \frac{\varepsilon}{1-\varepsilon} \Pr[E_{11}|E_1].$$
Let $q \stackrel{\triangle}{=} \Pr[E_{11}|E_1]$. Then, \beq \Pr[e
\in \cB | E_1] & = & \Pr[e \in \cB|E_{11} \cap E_1] \Pr[E_{11}|E_1]
+ \Pr[e\in
\cB|E_{12} \cap E_1]\Pr[E_{12}|E_1] \nonumber \\
& ~ & ~~+ \Pr[e\in \cB|E_{13} \cap E_1 ]\Pr[E_{13}|E_1] \nonumber \\
& \leq & 0 \times  q + 1 \times \frac{\varepsilon q}{1-\varepsilon}  + (\varepsilon + P_K |\bB(e,K)|) \lf(1-\frac{q}{1-\varepsilon}\rf) \nonumber \\
& = & \varepsilon + P_K |\bB(e,K)| + \frac{q}{1-\varepsilon}\left(\varepsilon - \varepsilon - P_K|\bB(e,K)|\right) \nonumber \\
& = & \varepsilon + P_K |\bB(e,K)| - \frac{q P_K|\bB(e,K)|}{1-\varepsilon} \nonumber \\
& \leq &   \varepsilon + P_K |\bB(e,K)|. \eeq

{\em Case 2.} Now, suppose $i_1\in V$ is such that $\bdg(i_1, e) =
K$. We will call this event $E_2$.  Further, define the event $E_{21}=\{R_1 = K\}$.
Due to the independence of selection of $R_1$,  $\Pr[E_{21} | E_2] =
P_K$. Under the event $E_{21} \cap E_2$, $e \in \cB$ with probability
$1$. Therefore, \beq
\Pr[e \in \cB | E_2] & = & \Pr[e \in \cB|E_{21} \cap E_2] \Pr[E_{21}|E_2] + \Pr[e \in \cB|E^c_{21} \cap E_2] \Pr[E^c_{21}|E_2] \nonumber \\
& = & 1 \times P_K + \Pr[e \in \cB | E_{21}^c \cap E_2] (1-P_K).
\label{xx11} \eeq Under the event $E^c_{21} \cap E_2$, we have $e \in
\cW_1$, and the remaining metric space $(\cW_1, \bdg)$. This metric
space has $< N$ points. Further, the ball of radius $K$ around $e$
with respect to this new metric space has at most $|\bB(e, K)| - 1$
points (this ball is with respect to the original metric space $\cG$
on $N$ points). Now we can invoke the induction hypothesis for this new
metric space  to obtain \beq \Pr[e \in \cB | E^c_{21} \cap E_2] & \leq &
\varepsilon + P_K \cdot(|\bB(e,K)| - 1). \label{xx22} \eeq From (\ref{xx11}) and
(\ref{xx22}), we have
\beq
\Pr[e \in \cB | E_3] & \leq & P_K + (1-P_K) (\varepsilon + P_K \cdot(|\bB(e,K)| - 1)) \nonumber \\
& = &  \varepsilon (1-P_K) + P_K |\bB(e,K)| + P_K^2 (1- |\bB(e,K)|) \nonumber \\
& \leq & \varepsilon + P_K |\bB(e,K)|.\nonumber \eeq
In above, we have used the fact that $|\bB(e,K)| \geq 1$ (or else, the bound was trivial to begin with).

\vspace{0.10in}

{\em Case 3.} Finally, let $E_3$ be the event that $\bdg(i_1, e) >
K$. Then, at the end of the first iteration of the algorithm, we
again have the remaining metric space $(\cW_1, \bdg)$ such that
$|\cW_1| < N$. Hence, as before, by induction hypothesis we have
$$\Pr[e\in \cB|E_3] \le \varepsilon + P_K |\bB (e,K)|.$$

Now, the three cases are exhaustive and disjoint. That is,
$\cup_{i=1}^3 E_i$ is the universe. Based on the above discussion,
we obtain the following. \beq
\Pr[e \in \cB] & = & \sum_{i=1}^3 \Pr[ e \in \cB | E_i] \Pr[E_i] \nonumber \\
 & \leq & \lf(\max_{i=1}^3 \Pr[ e \in \cB | E_i]\rf) \lf(\sum_{i=1}^3 \Pr[E_i] \rf) \nonumber \\
 & \leq & \varepsilon + P_K \cdot|\bB(e,K)|.
 \eeq
This completes the proof of Proposition \ref{claim:two}.
\end{proof}
\noindent Now, we will use Proposition \ref{claim:two} to complete the proof of Lemma \ref{lem:three}.
The definition of growth rate implies that,
$$ \lf|\bB(e, K)\rf| \leq  C \cdot K^\rho.$$
From the definition  $P_K=(1-\varepsilon)^{K-1}$, we have
$$ P_K \lf|\bB(e, K)\rf| \le C(1-\varepsilon)^{K-1} K^\rho.$$
Therefore, to show Lemma \ref{lem:three}, it is sufficient to show that our definition of $K$ satisfies the following Lemma.

\vspace{0.1in}
\begin{lemma}\label{lemma:new} We have that
$$C(1-\varepsilon)^{K-1} K^\rho \leq \varepsilon.$$
\end{lemma}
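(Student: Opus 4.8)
The plan is to turn the multiplicative inequality into an additive one by taking logarithms, and then to verify the additive inequality by checking that each block of the prescribed $K$ pays for one term on the right-hand side; the only delicate point will be the factor $\rho$ multiplying $\log K$ together with the fact that $C$ may be arbitrarily large. Throughout I take logarithms to base $e$, so that the elementary bound $1-\varepsilon\le e^{-\varepsilon}$ (valid for $\varepsilon\in(0,1)$) is available.

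First I would write $(1-\varepsilon)^{K-1}\le e^{-\varepsilon(K-1)}$, which is legitimate since $K\ge 2$ forces $K-1\ge 0$. Substituting into $C(1-\varepsilon)^{K-1}K^\rho\le\varepsilon$ and taking logarithms, the claim is implied by
\[
\varepsilon(K-1)\;\ge\;\rho\log K+\log C+\log\tfrac1\varepsilon \qquad(\star),
\]
so it suffices to establish $(\star)$ for the prescribed $K$. I would then introduce the shorthands $A=\tfrac{8\rho}{\varepsilon}$ and $D=\tfrac{C}{\varepsilon}\ge 1$, under which $K=A\log A+\tfrac{4}{\varepsilon}\log D+2$ and $\varepsilon A=8\rho$. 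A one-line computation gives $\varepsilon(K-1)=8\rho\log A+4\log D+\varepsilon$, and since $\log C+\log\tfrac1\varepsilon=\log D$, the inequality $(\star)$ becomes exactly
\[
8\rho\log A+3\log D+\varepsilon\;\ge\;\rho\log K.
\]

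The remaining task is to bound $\log K$ from above. Note first that $\rho\ge 1$ and $\varepsilon<1$ give $A>8$. Using $\log A\le A$ I bound $A\log A\le A^2$; the crucial move is the inequality $\log D\le\rho D^{1/\rho}$, obtained by applying $\log x\le x-1$ to $x=D^{1/\rho}$, which yields $\tfrac{4}{\varepsilon}\log D\le\tfrac{4\rho}{\varepsilon}D^{1/\rho}=\tfrac12 A\,D^{1/\rho}\le A^2D^{1/\rho}$. Bounding the constant $2\le A^2$ as well, each of the three summands of $K$ is at most $A^2D^{1/\rho}$, so $K\le 3A^2D^{1/\rho}\le A^8D^{1/\rho}$, where the last step uses $A>8$ to absorb the factor $3$ (indeed $A^6>8^6\gg 3$). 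Taking logarithms gives $\rho\log K\le 8\rho\log A+\log D$, which is dominated by $8\rho\log A+3\log D+\varepsilon$ because $\log D\ge 0$; this is precisely the displayed inequality, completing the argument.

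The main obstacle — and the reason $K$ has its particular shape — is the factor $\rho$ multiplying $\log K$ combined with an unbounded $C$: a crude estimate such as $\log K\le\log 4+\log(\max\text{ term})$ would leave a term of order $\rho\log\log C$ on the right, which the $\rho$-free budget $3\log D$ cannot cover when $\rho$ is large. The inequality $\log D\le\rho D^{1/\rho}$ is exactly what moves the logarithm of $C$ \emph{past} the factor $\rho$, converting the offending $\rho\log\log D$ into the harmless $\log D$. Once this conversion is in hand, everything else is routine bookkeeping that the constant $8$ (and the additive $2$) built into the definition of $K$ absorbs with room to spare.
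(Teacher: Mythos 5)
Your proof is correct, and every step checks out: the reduction via $1-\varepsilon\le e^{-\varepsilon}$ to the additive inequality $(\star)$, the identity $\varepsilon(K-1)=8\rho\log A+4\log D+\varepsilon$, and the chain $K\le A^2+A^2D^{1/\rho}+A^2\le 3A^2D^{1/\rho}\le A^8D^{1/\rho}$ (using $D\ge 1$ and $A>8$), which yields $\rho\log K\le 8\rho\log A+\log D$ and hence $(\star)$. The route differs from the paper's in two ways. First, the paper uses the weaker linearization $\log\frac{1}{1-\varepsilon}\ge\frac{\varepsilon}{2}$, so it must establish the stronger requirement $K\ge\frac{2\rho}{\varepsilon}\log K+\frac{2}{\varepsilon}\log C+\frac{2}{\varepsilon}\log\frac{1}{\varepsilon}+1$; you use the sharper $\log\frac{1}{1-\varepsilon}\ge\varepsilon$, which buys a factor of two of slack that your cruder estimates then spend. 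Second, and more substantively, the two proofs tame the self-referential term $\rho\log K$ differently: the paper splits $K$ in half and proves $\frac{K}{2}\ge\frac{2\rho}{\varepsilon}\log K$ by a monotonicity argument (the map $x\mapsto\frac{x}{2}-\frac{2\rho}{\varepsilon}\log x$ is increasing for $x\ge\frac{4\rho}{\varepsilon}$ and is nonnegative at the anchor $\hat K=\frac{8\rho}{\varepsilon}\log\frac{8\rho}{\varepsilon}$), so the $\log C$ and $\log\frac{1}{\varepsilon}$ contributions to $K$ are absorbed automatically; you instead bound $K$ by the closed form $A^8D^{1/\rho}$, where the inequality $\log D\le\rho D^{1/\rho}$ is exactly what carries the $C$-dependence past the factor $\rho$. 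Your device is arguably more elementary and makes the role of each summand of $K$ explicit, while the paper's monotonicity trick handles arbitrary $C$ invisibly at the cost of a calculus-style argument. One small caveat: your closing paragraph claiming that the naive bound $\log K\le\log 3+\log(\max\ \text{term})$ must fail is motivational commentary only; the proof does not rely on it, and whether that naive route actually fails would require a case analysis on which term achieves the maximum, so treat that remark as heuristic rather than established.
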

\vspace{0.1in}
\begin{proof}

We will show the following equivalent inequality.
\beq (K-1)\log (1-\varepsilon)^{-1} \ge \rho \log K +\log C + \log \frac{1}{\varepsilon}.\label{eq:032}\eeq
First, note that for all $\varepsilon\in (0,1),$ $$\log (1-\varepsilon)^{-1}\ge \log(1+\varepsilon) \ge \frac{\varepsilon}{2}.$$
Hence to prove (\ref{eq:032}), it is sufficient to show that
\beq K \ge \frac{2\rho}{\varepsilon} \log K + \frac{2}{\varepsilon}\log C + \frac{2}{\varepsilon}\log \frac{1}{\varepsilon} +1.\label{eq:306}\eeq
Recall that $$ K= K(\varepsilon, \rho) = \frac{8\rho}{\varepsilon} \log
\lf(\frac{8\rho}{\varepsilon}\rf) + \frac{4}{\varepsilon}\log C + \frac{4}{\varepsilon}\log \frac{1}{\varepsilon} +2.$$
From the definition of $K$, we will show that
$$\frac{K}{2} \ge \frac{2\rho}{\varepsilon}\log K$$
and $$\frac{K}{2}\ge \frac{2}{\varepsilon}\log C + \frac{2}{\varepsilon}\log \frac{1}{\varepsilon} +1,$$ which will prove (\ref{eq:306}). The
following is straightforward:
\beq \frac{K}{2}\ge \frac{2}{\varepsilon}\log C + \frac{2}{\varepsilon}\log \frac{1}{\varepsilon} +1.\label{eq:401} \eeq
Now, let $\hat K = \frac{8\rho}{\varepsilon} \log
\lf(\frac{8\rho}{\varepsilon}\rf).$ Then
$$\frac{\hat K}{2}
= \frac{4\rho}{\varepsilon}\log \lf(\frac{8\rho}{\varepsilon}\rf) \ge
\frac{2\rho}{\varepsilon}\lf( \log \lf(\frac{8\rho}{\varepsilon}\rf) + \log
\log \lf(\frac{8\rho}{\varepsilon}\rf)\rf) = \frac{2\rho}{\varepsilon}\log
\hat K.$$
That is, $\frac{\hat K}{2} - \frac{2\rho}{\varepsilon}\log
\hat K \ge 0$. Since the function $\phi (x) = \frac{x}{2} - \frac{2\rho}{\varepsilon}\log x ~$ is an increasing function of $x$ when $x\ge \frac{4\rho}{\varepsilon}$, and from the fact that $K \ge \hat K \ge \frac{4\rho}{\varepsilon},$ we have
\beq \frac{K}{2} \ge \frac{2\rho}{\varepsilon}\log K.\label{eq:402}\eeq
From (\ref{eq:401}) and (\ref{eq:402}), we have (\ref{eq:306}), which completes the proof of Lemma  \ref{lemma:new}.
\end{proof}

\section{Proofs of Theorems \ref{thm:main-poly}, \ref{thm:main-gen}: Modularity optimization}
\label{sec:five2}

In this Section, we prove Theorem \ref{thm:main-poly}, and Theorem \ref{thm:main-gen} for modularity optimization.

\vspace{0.15cm}

\noindent{\bf Lower bound on $\cM^*$.} Here we provide a lower bound on $\cM^*$ that will be useful
to obtain multiplicative approximation property.
\begin{lemma}\label{lem:one}
Let $\cM^* = \max_{\chi} \cM(\chi)$ denote the maximum value of modularity for graph $G$. Then,
\[
\cM^* \geq \frac{1}{2(2C-1)} ~\Big(1-\frac{C^2}{2m}\Big).
\]
\end{lemma}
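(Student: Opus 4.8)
The plan is to prove the lower bound by exhibiting a \emph{single} explicit partition (coloring) $\chi$ and showing that its modularity $\cM(\chi)$ already meets the claimed bound; since $\cM^* = \max_\chi \cM(\chi)$, any one good partition suffices. The natural candidate is a matching-based partition: I would take a large matching $M$ in $G$, declare each matched pair $\{i,j\}\in M$ to be its own cluster, and leave every unmatched vertex as a singleton cluster. The intuition is that each matched edge contributes a genuine within-cluster edge while keeping the cluster \emph{volumes} (hence the null-model penalty $d_id_j/2m$) as small as possible; this is exactly the configuration that makes modularity $\Theta(1)$ rather than $o(1)$ on large graphs, which is what the constant leading term in the statement demands.

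To produce a large matching I would reuse the Vizing/pigeonhole argument already used in the proof of Lemma \ref{lem:1one}: properly edge-colour $G$ with at most $d^*+1$ colours, where $d^*$ is the maximum degree, and keep the largest colour class, which is a matching $M$ with $|M|\ge m/(d^*+1)$. Polynomial growth controls the degree through balls of radius two, since $d^*+1 = \max_i |\mathbf{B}_G(i,2)|$, which is bounded by $2C$, giving $d^*\le 2C-1$. Hence $|M|/m \ge 1/(2C) \ge 1/(2(2C-1))$ and, crucially, every cluster in the partition has volume $\sum_{i\in c}d_i \le 2d^* \le 2(2C-1)$.

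The key computation is the exact modularity of this partition. Using $A_{ii}=0$, a single-edge cluster $\{i,j\}$ has $M(\{i,j\}) = 2 - (d_i+d_j)^2/(2m)$ and a singleton $\{k\}$ has $M(\{k\}) = -d_k^2/(2m)$, so summing and dividing by $2m$ gives
\[
\cM(\chi) = \frac{|M|}{m} - \frac{1}{(2m)^2}\sum_{c}\Big(\sum_{i \in c} d_i\Big)^2,
\]
with the sum over clusters $c$. The first term is bounded below by $1/(2(2C-1))$ as above. For the null-model penalty I would use $\sum_c (\sum_{i\in c} d_i) = 2m$ together with the uniform volume bound to control $\sum_c (\sum_{i\in c}d_i)^2$, turning the penalty into a term of order $C^2/m$; factoring it against the first term is what produces the multiplicative correction $(1 - C^2/(2m))$.

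The main obstacle is precisely this penalty estimate: the naive bound $\sum_c (\sum_{i\in c}d_i)^2 \le \big(\max_c \sum_{i\in c}d_i\big)\sum_c \sum_{i\in c}d_i$ loses a constant factor and, on its own, does not close the gap to the stated constant for every $m$. The delicate point is to play the matching-size lower bound and the volume/penalty upper bound off against each other simultaneously rather than bounding each in isolation, and to observe that the claimed inequality is only nontrivial once $2m > C^2$; for $2m \le C^2$ the right-hand side is non-positive and the bound holds trivially, since the all-in-one-cluster partition already yields $\cM^* \ge 0$. Tracking the constants through this regime is routine, but it is exactly where the clean form $\frac{1}{2(2C-1)}\big(1 - \frac{C^2}{2m}\big)$ gets pinned down.
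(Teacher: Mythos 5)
Your construction is the same as the paper's---a matching-based clustering with matched pairs as two-node clusters and everything else a singleton, with the matching size controlled by a degree bound coming from polynomial growth---but the proof has a genuine gap precisely at the step you flag, and the gap is worse than you suggest. You evaluate modularity \emph{including} the diagonal terms $i=j$, so a singleton $\{k\}$ is charged $-d_k^2/(2m)$ and a pair $\{i,j\}$ is charged $(d_i+d_j)^2/(2m)$, and you then try to beat the aggregate penalty $\frac{1}{(2m)^2}\sum_c \big(\sum_{i\in c} d_i\big)^2$ with the aggregate gain $|M|/m$. You correctly observe that the naive bounds do not close the gap, but the proposed repair (``play the two bounds off against each other; the rest is routine'') cannot succeed, because under your convention the lemma itself is false for small graphs: for $G$ a single edge ($m=1$, $C=1$, $\rho=1$, which is in your ``nontrivial'' regime $2m>C^2$), the one-cluster partition gives $\cM=0$ and the two-singleton partition gives $\cM=-\tfrac12$, so $\cM^*=0$, while the claimed bound is $\frac{1}{2(2\cdot 1-1)}\big(1-\tfrac12\big)=\tfrac14>0$. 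No bookkeeping of constants can finish the proof from your setup.

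The paper's proof closes because it (implicitly) evaluates modularity \emph{without} diagonal terms: a singleton then contributes exactly $0$, and a matched pair $\{i,j\}$ contributes $\frac{2}{2m}\big(A_{ij}-\frac{d_i d_j}{2m}\big) \geq \frac{1}{m}\big(1-\frac{C^2}{2m}\big)$---note the pair pays $d_i d_j \le C^2$, not $(d_i+d_j)^2 \le 4C^2$. The multiplicative factor $\big(1-\frac{C^2}{2m}\big)$ thus appears per pair, so there is no separate global penalty to fight: summing over the $\geq m/(2C-1)$ edges of a \emph{maximal} matching (every edge meets a matched edge, and each matched edge meets at most $2d^*-1\le 2C-1$ edges) immediately gives $\cM(\hat\chi)\geq \frac{1}{2C-1}\big(1-\frac{C^2}{2m}\big)$, even stronger than stated. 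Two secondary points: your Vizing detour only yields the weaker $|M|\ge m/(d^*+1)$; and your claim $\max_i |\mathbf{B}_G(i,2)|\le 2C$ does not follow from Definition~\ref{def:polynomially} unless $\rho=1$ (in general the bound is $C2^\rho$)---the paper instead reads the growth condition as bounding the one-hop neighborhood by $C$, i.e.\ $d^*\le C$, which is exactly what puts $C^2$ (rather than a larger constant) in the final bound. So the repair is: adopt the diagonal-free convention (the reading under which the lemma is true), bound each pair's \emph{net} contribution directly, and the computation closes in a few lines.
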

\begin{proof}
Since the graph has polynomial growth with degree $\rho$ and associated constant $C$, it
follows that the number of nodes within one hop of any node $i \in V$ (i.e. its immediate
neighbors) is at most $C$. That is, $d_i \leq C$ for all $i \in V$. Given this bound, it
follows that there exists a matching of size at least $m/(2C-1)$ in $G$. Given such a matching,
consider the following clustering (coloring). Each edge in the matching represent a community
of size $2$, while all the nodes that are unmatched lead to community of size $1$. By
definition, the individual (unmatched) nodes contribute $0$ to the modularity. The nodes
that are part of the two node communities, each contribute at least
$\frac{1}{2m}\big(1-\frac{C^2}{2m}\big)$ since vertex degree of each node is bounded
above by $C$. Since there are $m/(2C-1)$ edges in the matching, it follows that the net
modularity of such community assignment is at least $\frac{1}{2(2C-1)}\big(1-\frac{C^2}{2m}\big)$.
This completes the proof of Lemma \ref{lem:one} (Similar result, with tighter constant, follows from \cite{Han}).
\end{proof}

\medskip
\noindent{\bf Decomposition of $\cM^*$.} Here we show that by maximizing modularity
on a partition of $V$ separately, the resulting clustering has modularity as good as that
of optimal partitioning with penalty in terms of the edges across partitions. To that end,
let $V = V_1 \cup \dots \cup V_p$  be a partition of $V$, i.e. $V_i \cap V_j = \emptyset$
for $i \neq j$. Let $G_k = (V_k, E_k)$, where $E_k = (V_k \times V_k)\cap E$, denote the subgraph
of $G$ for $1\leq k\leq p$. Let $\chi^k$ be a coloring (clustering) of $G_k$ with maximum
modularity. Let $\chi^*$ be a coloring of $G$ with maximum modularity ($\cM^*$) and
let $\chi^{*,k}$ be the restriction of $\chi^*$ to $G_k$. Let $\hat{\chi}$ denote the
clustering of $G$ obtained by taking union of clusterings $\chi^1,\dots, \chi^p$.
Then we claim the following.
\begin{lemma}\label{lem:two}
For any partition $V = V_1\cup \dots \cup V_p$,
\[
\cM(\hat{\chi}) ~\geq \cM(\chi^*) - \frac{1}{2m}| E \backslash \cup_{k=1}^p E_k |.
\]
\end{lemma}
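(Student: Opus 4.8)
The plan is to imitate the decomposition argument of Lemma~\ref{lemma:diff2}, while accounting for the fact that the modularity objective carries the degree-penalty terms $-\frac{d_id_j}{2m}$ for \emph{every} pair of vertices rather than only for edges. For the given partition $V=V_1\cup\dots\cup V_p$ I would first introduce, for each $k$, the per-partition objective computed with the \emph{global} degrees $d_i$ and global edge count $m$,
\[
M_k(\chi) \;=\; \sum_{i,j\in V_k} \bone_{\{\chi(i)=\chi(j)\}}\Big(A_{ij}-\frac{d_id_j}{2m}\Big),
\]
which depends only on the restriction of $\chi$ to $V_k$. By the construction of the algorithm $\chi^k$ maximizes $M_k$ over all colorings of $V_k$, so that $M_k(\chi^k)\ge M_k(\chi^{*,k})$ for every $k$, where $\chi^{*,k}$ is the restriction of $\chi^*$.

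Next I would split the modularity according to whether the two endpoints lie in the same part or in two different parts:
\[
2m\,\cM(\chi) \;=\; \sum_{k=1}^p M_k(\chi) \;+\; \sum_{\substack{i\in V_k,\,j\in V_{k'}\\ k\ne k'}} \bone_{\{\chi(i)=\chi(j)\}}\Big(A_{ij}-\frac{d_id_j}{2m}\Big).
\]
Applied to $\hat\chi$, the second (cross-partition) sum vanishes, since $\hat\chi$ assigns distinct colors to vertices in distinct parts, so $2m\,\cM(\hat\chi)=\sum_{k=1}^p M_k(\chi^k)$. Applied to $\chi^*$, I would write its cross-partition sum as $C^*$. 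Subtracting the two expansions and using $M_k(\chi^k)\ge M_k(\chi^{*,k})$ termwise gives
\[
2m\big(\cM(\hat\chi)-\cM(\chi^*)\big) \;=\; \sum_{k=1}^p\big(M_k(\chi^k)-M_k(\chi^{*,k})\big)\;-\;C^* \;\ge\; -\,C^*.
\]

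It then remains to bound $C^*$ from above. Every cross-partition penalty term $-\frac{d_id_j}{2m}$ is non-positive, so discarding all of them only enlarges the sum; what survives is $\sum A_{ij}$ over cross-partition pairs sharing a color under $\chi^*$. Since $A_{ij}=1$ exactly on edges, this is at most the number of edges running between distinct parts, namely $|\cB|=|E\setminus\cup_{k=1}^p E_k|$. Hence $C^*\le|\cB|$ and $2m(\cM(\hat\chi)-\cM(\chi^*))\ge-|\cB|$, which rearranges to the claimed inequality.

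The main obstacle is conceptual rather than computational: unlike the Hamiltonian $\cH$, modularity does not split cleanly across the subgraphs because the degree-penalty term couples all pairs of vertices, so a naive restriction argument does not close. The observation that unblocks the proof is that these penalty terms are non-positive and may therefore simply be discarded when upper-bounding the cross-partition contribution of $\chi^*$. A second point requiring care is that $M_k$ must be defined with the \emph{global} quantities $d_i$ and $m$; only then is the optimality inequality $M_k(\chi^k)\ge M_k(\chi^{*,k})$ the exact guarantee obtained by running the centralized modularity algorithm on $G_k$.
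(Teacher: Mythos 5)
Your proof is correct and follows essentially the same route as the paper's: the same split of $2m\,\cM$ into within-partition objectives (computed with the global degrees $d_i$ and global $m$) plus a cross-partition remainder, the same optimality comparison $M_k(\chi^k)\ge M_k(\chi^{*,k})$, and the same bound on the cross terms obtained by discarding the non-positive penalty terms and counting cross-partition edges. The two points you flag as needing care -- that the cross-partition sum vanishes for $\hat{\chi}$ and that the per-partition objective must use global quantities -- are precisely steps (a) and (b) of the paper's derivation.
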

\begin{proof}
Consider the following:
\begin{align}
2m ~\cM(\hat{\chi}) & = \sum_{i, j \in V} \bone_{{\small \{\hat{\chi}(i) = \hat{\chi}(j)\}}} \Big(A_{ij} - \frac{d_i d_j}{2m}\Big) %\nonumber \\
            ~\stackrel{(a)}{=} \sum_{k=1}^p \sum_{i,j \in V_k} \bone_{{\small \{\chi^k(i) = \chi^k(j)\}}} \Big(A_{ij} - \frac{d_i d_j}{2m}\Big) \nonumber \\
           & \stackrel{(b)}{\geq} \sum_{k=1}^p \sum_{i, j \in V_k} \bone_{{\small \{\chi^{*,k}(i) = \chi^{*,k}(j)\}}} \Big(A_{ij} - \frac{d_i d_j}{2m}\Big) \nonumber \\
           & = \sum_{i, j \in V} \bone_{{\small \{\chi^{*}(i) = \chi^{*}(j)\}}} \Big(A_{ij} - \frac{d_i d_j}{2m}\Big)% \nonumber \\
%& \quad
- \sum_{{\small (i, j) \in V^2 \backslash \cup_{k=1}^p V_k^2}}  \bone_{{\small \{\chi^{*}(i) = \chi^{*}(j)\}}} \Big(A_{ij} - \frac{d_i d_j}{2m}\Big) \label{eq:one-x} \\
& \geq \sum_{i, j \in V} \bone_{{\small \{\chi^{*}(i) = \chi^{*}(j)\}}} \Big(A_{ij} - \frac{d_i d_j}{2m}\Big) - |E \backslash  \cup_{k=1}^p E_k|, \label{eq:one}
\end{align}
where the last inequality follows because the term inside the summation in \eqref{eq:one-x} is positive
only if $A_{ij} =1$, i.e. $(i,j) \in E$ or else it is negative. Therefore, for the purpose of lower bound, we only
need to worry about $(i,j) \in E$ such that $(i,j) \notin \cup_{k=1}^p V_k \times V_k$. This is precisely equal to
$E \backslash \cup_{k=1}^p E_k$. The (a) follows because $\hat{\chi}$, by definition, assigns nodes in
$V^i$ and $V^j$ for $i \neq j$ to different clusters. The (b)
follows because $\chi^k$ has maximum modularity in $G_k$ and hence it is at least as large (in terms of
modularity) as that of the $\chi^{*,k}$, the restriction of $\chi^*$ to $G_k$. This completes the
proof of Lemma \ref{lem:two} since the first term in \eqref{eq:one} is precisely $2m \cM(\chi^*) = 2m \cM^*$.
\end{proof}

%\medskip
\noindent{\bf Approximation factor for $\cM(\hat{\chi})$.} Let $\beta = |E \backslash \cup_{k=1}^p E_k|/m$ denote
the fraction of edges that are across partitions for a given partition $V= V_1 \cup\dots\cup V_p$. Then, from
Lemmas \ref{lem:one} and \ref{lem:two}, it follows that for $m \geq C^2$,
\begin{align}\label{eq:lemthree}
\cM(\hat{\chi}) & \geq \cM(\chi^*) \Big(1 - \frac{\beta}{2\cM(\chi^*)}\Big) ~ \geq \cM(\chi^*) \Big(1-2 (2C-1) \beta\Big).
\end{align}
Therefore, if $2 (2C-1)\beta \leq \delta$, then $\cM(\hat{\chi})$ is at least $\cM^*\cdot(1-\delta)$.
Now from Lemma \ref{lem:three} and the linearity of expectation, we have
\begin{align}\label{eq:lemthree.1}
\E\big[|E \backslash \cup_{k=1}^p E_k|\big] & \leq \frac{\delta}{2 (2C-1)} m.
\end{align}

%\medskip
\vspace{0.15cm}
\noindent{\bf Completing Proof of Theorem \ref{thm:main-poly}(a).} When $\cA$ produces exact solution to the
modularity optimization for each partition, the resulting solution of our algorithm is $\hat{\chi}$.
Therefore, from \eqref{eq:lemthree} and \eqref{eq:lemthree.1}, it follows that
\begin{align}
\E[\cM(\hat{\chi})] & \geq \cM(\chi^*) (1-\delta).
\end{align}

\medskip
\noindent{\bf Completing Proof of Theorem \ref{thm:main-poly}(b).} Suppose we use an approximation procedure $\cA$
to produce clustering on each partition in our algorithm. Let $\cA$ be such that the clustering produced
has modularity at least $1/\alpha(n)$ times the optimal modularity for any graph of size $n$. Now since
$\cA$ is applied to each partition separately, the approximation is within $\alpha(\tilde{K})$ where $\tilde{K} = C K^\rho$
is the bound on the number of nodes in each partition. Let $\tilde{\chi}^1,\dots,\tilde{\chi}^p$ be the clustering
(coloring) produced by $\cA$ on graphs $G_1,\dots, G_p$. Then by the approximation property of $\cA$, we have
\begin{align}
\cM(\tilde{\chi}^k) & \geq \frac{1}{\alpha(\tilde{K})} \cM({\chi}^k).
\end{align}
Therefore, for the overall clustering $\tilde{\chi}$ obtained as union of $\tilde{\chi}^1,\dots,\tilde{\chi}^p$, we have
\begin{align}\label{eq:cm}
\cM(\tilde{\chi}) & = \sum_{k=1}^p \cM(\tilde{\chi}^k)  %\nonumber \\
                         %&
                         ~\geq \frac{1}{\alpha(\tilde{K})} \sum_{k=1}^p \cM({\chi}^k) %\nonumber \\
                         %&
                         ~= \frac{1}{\alpha(\tilde{K})} \cM(\hat{\chi}).
\end{align}
Since $\E[\cM(\hat{\chi})]$ is at least $(1-\delta)  \cM^*$, it follows that
$\E[\cM(\tilde{\chi})]\ge \frac{(1-\delta)}{\alpha(\tilde{K})}\cM^*$.

\medskip
\noindent{\bf Completing Proof of Theorem \ref{thm:main-gen}.} 
Lemma \ref{lem:two} directly proves Theorem \ref{thm:main-gen}(a), and the same arguments as in the proof Theorem \ref{thm:main-poly}(b) completes the proof of Theorem \ref{thm:main-gen}(b).

%When $\cA$ produces exact solution to the
%modularity optimization for each partition, the resulting solution of our algorithm is $\hat{\chi}$.
%Therefore, the Lemma \ref{lem:two} immediately implies the desired claim. % of Theorem \ref{thm:main.1}(a).

%\medskip
%\noindent{\bf Completing Proof of Theorem \ref{thm:main-gen}(b).}

%Suppose we use an approximation procedure $\cA$
%to produce clustering on each partition in our algorithm. Let $\cA$ be such that the clustering produced
%has modularity at least $1/\alpha(n)$ times the optimal modularity for any graph of size $n$. Now since the
%$\cA$ is applied to each partition separately, the approximation is within $\alpha(\tilde{K})$ where $\tilde{K} = C K^\rho$
%is the bound on the number of nodes in each partition. Let $\tilde{\chi}^1,\dots,\tilde{\chi}^p$ be the clustering
%(coloring) produced by $\cA$ on graphs $G_1,\dots, G_p$. Then by property of $\cA$, we have
%\begin{align}
%\cM(\tilde{\chi}^k) & \geq \frac{1}{\alpha(\tilde{K})} \cM({\chi}^k).
%\end{align}
%Therefore, for the overall clustering $\tilde{\chi}$ obtained as union of $\tilde{\chi}^1,\dots,\tilde{\chi}^p$,
%\begin{align}
%\cM(\tilde{\chi}) & = \sum_{k=1}^p \cM(\tilde{\chi}^k)  %\nonumber \\
                         %&
%                         ~\geq \frac{1}{\alpha(\tilde{K})} \sum_{k=1}^p \cM({\chi}^k) %\nonumber \\
                         %&
%                         ~\geq \frac{1}{\alpha(\tilde{K})} \cM(\hat{\chi}).
%\end{align}
%From this and Theorem \ref{thm:main-gen}(a), the desired claim follows. %f Theorem \ref{thm:main.1}(b) follows immediately.

\section{Conclusion}\label{sec:six}

In recent years, it has become increasingly important to design distributed high-performance graph computation algorithms that can deal with large-scale networked data in a cloud-like distributed computation architecture. Inspired by this, in this paper, we have introduced Partition-Merge, a simple meta-algorithm, that takes an existing centralized algorithm and produces a distributed implementation. The resulting distributed implementation, with the underlying graph having polynomial growth property, runs in essentially linear time and is as good as, and sometimes even better than the centralized algorithm.

The algorithm is applicable to any graph in general, and its computation time as well as performance guarantees depend on the underlying graph structure -- interestingly enough, we have evaluated the performance guarantees for any graph. %Similar results (with partitioning scheme of \cite{nips08}) will hold for minor-excluded graphs as well (Planar graphs are special case of minor-excluded graph family). 
We strongly believe that such an algorithmic approach would be of great value for developing large-scale cloud-based graph computation facilities.

\acks{Part of this work appeared in the preliminary version \cite{nips09}. This work is supported in parts by Army Research Office under MURI Award 58153-MA-MUR, and in part by Basic Science Research Program through the National Research Foundation of Korea(NRF) funded by the Ministry of Education, Science and Technology(2012032786).}

\bibliographystyle{plain}
\bibliography{biblio}

\end{document}